\documentclass[prb,nobalancelastpage,twocolumn,superscriptaddress]{revtex4-1}
\bibliographystyle{h-physrev}

\usepackage{color,amsthm,amsmath,amsxtra,amsfonts,dsfont,graphicx,bm}
\usepackage[english]{babel}
\usepackage{color}
\usepackage{hyperref}
\usepackage{tikz}

\def\Id{{\openone}}
\newcommand{\be}{\begin{equation}}
\newcommand{\ee}{\end{equation}}
\newcommand{\bea}{\begin{eqnarray}}
\newcommand{\eea}{\end{eqnarray}}

\newcommand{\tr}{\mathrm{tr}}


\newcommand{\nocomma}{}
\newcommand{\tmop}[1]{\ensuremath{\operatorname{#1}}}
\newcommand{\tmscript}[1]{\text{\scriptsize{$#1$}}}

\newtheorem{lemma}{Lemma}
\newtheorem{proposition}{Proposition}
\newtheorem{theorem}{Theorem}

\newcommand{\norm}[1]{\left\| #1 \right\|}
\newcommand{\id}{\mathrm{Id}}
\newcommand{\ep}{\varepsilon}

\begin{document}

\title{Approximating Gibbs states of local Hamiltonians efficiently with PEPS}

\author{Andras \surname{Molnar}}
\affiliation{Max-Planck-Institut f{\"{u}}r Quantenoptik,
Hans-Kopfermann-Str.\ 1, D-85748 Garching, Germany}
\author{Norbert Schuch}
\affiliation{JARA Institute for Quantum Information, RWTH Aachen University, D-52056 Aachen, Germany}
\author{Frank Verstraete}
\affiliation{Faculty of Physics, University of Vienna, Boltzmanngasse 5, 1090 Vienna, Austria and\\ Department of Physics and Astronomy, Ghent University, Ghent, Belgium}
\author{J.~Ignacio \surname{Cirac}}
\affiliation{Max-Planck-Institut f{\"{u}}r Quantenoptik, Hans-Kopfermann-Str.\ 1, D-85748 Garching, Germany}

\begin{abstract}
We analyze the error of approximating Gibbs states of local quantum 
spin Hamiltonians on lattices with Projected Entangled Pair States 
(PEPS) as a function of the bond dimension ($D$), temperature 
($\beta^{-1}$), and system size ($N$). First, we introduce a compression 
method in which the bond dimension scales as 
$D=e^{O(\log^2(N/\epsilon))}$ if $\beta<O(\log (N))$. Second, building 
on the work of Hastings\cite{Hastings2006}, we derive a polynomial 
scaling relation, $D=\left(N/\epsilon\right)^{O(\beta)}$. This implies 
that the manifold of PEPS forms an efficient representation of Gibbs 
states of local quantum Hamiltonians. From those bounds it also follows 
that ground states can be approximated with $D=N^{O(\log(N))}$ whenever 
the density of states only grows polynomially in the system size. All 
results hold for any spatial dimension of the lattice.
\end{abstract}

\maketitle

\section{Introduction}
\label{Introduction}

Problems dealing with quantum many-body systems in lattices appear very often in different branches of Physics and Chemistry. They typically correspond to discretized versions of first-principle continuum models, like in high-energy physics, atomic physics, or quantum chemistry, or provide a phenomenological description of a complex system, as in condensed matter physics. They are characterized in terms of a lattice Hamiltonian, $H$, which describes the motion, as well as the interactions among the different constituents. Apart from generating the dynamics via the Schr\"odinger equation, the Hamiltonian defines the quantum state of the system in thermal equilibrium through the Gibbs density operator,
 \be
 \label{rho}
 \rho = \frac{e^{-\beta H}}{Z} = \frac{e^{-\beta H}}{\tr\left[e^{-\beta H}\right]},
 \ee
where $Z$ is the partition function and $\beta=1/\kappa_B T$ is the inverse temperature (we set the Bolzmann constant $\kappa_B=1$). This operator encodes all the (statical) physical properties of our systems. Extracting that information becomes a hard problem, even for systems consisting of very few particles. The reason is that, in order to determine expectation values of observables, we have to express $\rho$ in a basis of the corresponding Hilbert space, and the dimension of the latter grows exponentially with the number of lattice sites, $N$ (i.e. volume) of the lattice. This fact is ultimately related to the tensor product structure inherent in quantum mechanical problems dealing with composite objects, and thus ubiquitous in several branches of science.

There exist different ways around that problem, at least in some specific situations. For instance, one can employ sampling techniques in certain models (not suffering from the sign problem), to accurately determine the physical properties of a system in thermal equilibrium.  Alternatively, one can restrict oneself to simple tractable families of states depending on few parameters, which can then be determined by variational techniques. This last approach typically requires a good intuition to select which family will encompass all the physical properties that one has to describe, and can easily lead to either wrong or inaccurate results. 
Yet another approach is that of quantum simulation, where the
Hamiltonian of interest is implemented on a different system on which
one has enough control \cite{Cirac2012a}.

Strictly speaking, the exponential scaling of the dimension of the Hilbert space with the size of the lattice should not be the ultimate reason for the difficulty of quantum many-body problems, at least for the ones that naturally appear in nature. For instance, if $H$ is the sum of terms acting non-trivially only on at most $x$ lattice sites, then we can characterize all possible Hamiltonians with a number of parameters that scales only polynomially with $N$. If those terms are local, meaning that the distance between the sites on which term of $H$ acts is bounded by a constant, this scale is even linear in $N$. Thus, for all those problems, $\rho$ itself only depends on few parameters. One says that the states can only explore a very small ``corner'' of the Hilbert space \cite{Cirac2009}. Consequently, it may be possible to utilize this fact to find families of states that {\em describe all possible many-body lattice problems with $x$-body interactions in thermal equilibrium, and that depend on a number of parameters that only grows polynomially with} $N$. Thus, a central problem in this context is to find and characterize such a family of states. A first and fundamental step would be to solve that problem for local Hamiltonians, on which we will concentrate in the following.

Matrix Product States (MPS) \cite{Fannes1992,Perez-Garcia2007} provide the answer for one dimensional models at zero temperature for both, gapped \cite{Hastings2007,Landau2013} and critical models \cite{Verstraete2006}. Specifically, if $\Psi_0$ is the ground state of such a Hamiltonian there exists a MPS of bond dimension $D$, $\Psi_{\mathrm{MPS}}$, such that $\|\Psi_0 -\Psi_{\mathrm{MPS}}\|<\epsilon$ with $D=O[poly(N/\epsilon)]$.
Note that, in turn, the number of parameters to characterize the MPS scales polynomially with $D$. This result is strongly connected to the area law \cite{Srednicki1993,Eisert2010}, which is fulfilled (or only slightly violated) for those models and MPS. In higher dimensions and still at zero temperature, it is conjectured (and proven under certain assumptions \cite{Masanes2009,Hamza2009b}), that the area law still holds (with logarithmic corrections for certain critical models \cite{Wolf2006,Gioev2006}). In that case, one would expect that the Projected Entangled-Pair States (PEPS) \cite{Verstraete2004,Verstraete2004c}, which extend MPS to higher dimensions, would provide us with the efficient description of that corner of the Hilbert space \cite{Cirac2009}. Moreover, for any finite temperature (independent of $N$), an area law has been proven \cite{Wolf2008} both for Gibbs states (\ref{rho}), as well as for Projected Entangled-Pair Operators (PEPO), the extension of PEPS to mixed sates. This also suggests that PEPOs can efficiently describe Gibbs states of local Hamiltonians. 
From the physics point of view, this is actually the relevant question, as any extended system can only be cooled down to a certain temperature independent of the system size.

Hasting \cite{Hastings2006} has already derived some remarkable results addressing that question. He has shown that in $d$ spatial dimensions, one can build a PEPO, $\rho_{\rm PEPO}$, such that $||\rho-\rho_{\rm PEPO}||_1< \epsilon$ with bond dimension scaling as
 \be
 D=e^{O(\beta\log(N/\epsilon)^{d})}.
 \ee
 This gives a polynomial scaling for one dimension, and a sub-exponential (although superpolynomial) one for higher ones. This result also implies a bound for the approximation of the ground state. In fact, if $H$ is gapped and the density of states for a fixed energy only grows as poly$(N)$, then choosing $\beta=O(\log N)$ in (\ref{rho}) we obtain a state that is as close as we want to the ground state \cite{Hastings2007b}. This means that, under those conditions, we can find a PEPS approximation of the ground state with
 \be
 D=e^{O(\log(N/\epsilon)^{d+1})}.
 \ee
In the present paper we derive the following results. First, we use a novel method to obtain a bound for $\beta \leq O(\log(N))$ independent of the dimension (although still superpolynomial in $N$), 
 \be
 \label{Trotterbound1}
 D=e^{O(\log^2(N/\epsilon))}.
 \ee
Under the same condition on the density of states as before, we also obtain that the ground state can be approximated with
 \be
 \label{Trotterbound2}
 D=e^{O(\log^2(N/\epsilon))},
 \ee
independent of the dimension.
Finally, using Hastings' construction of the PEPO (see also \cite{Kliesch2013}), we show that it is possible to have a polynomial scaling for any temperature, i.e.
 \be
 \label{Poly}
 D=(N/\epsilon)^{O ( \beta)}.
 \ee

The paper is organized as follows. In section \ref{problem} we define the problem we are addressing in this work. Section \ref{sec:trotter} derives the bounds (\ref{Trotterbound1}) and (\ref{Trotterbound2}) using a technique based on the Trotter expansion. In Section \ref{sec:Hastings} we use a different encoding of the PEPO based on Hastings' construction to obtain the polynomial bound (\ref{Poly}).
In all these sections we quote the results and explain how we have proven them. In the appendix we give details of the proofs.

\section{Problem}
\label{problem}

We consider a growing sequence of finite spin systems, $S_n$, with two-body interactions. To every system, $S_n$, we assign a graph, $\mathcal{G}_n = (\mathcal{V}_n, \mathcal{E}_n)$, where the vertices $\mathcal{V}_n$ correspond to the individual spins and the edges $\mathcal{E}_n$ to interactions. The Hamiltonian is such that only the connected points interact:
\begin{equation}
  H_n = \sum_{e \in \mathcal{E}_n} h_e,
\end{equation}
where $h_e$ acts non-trivially on spins $v$ and $w$ if $e = ( v, w)$. Even though for simplicity we have considered only nearest neighbor interactions, the results generalize to local more-body interactions. We will assume that the (operator) norm of all the terms in the Hamiltonians is bounded by 1, i.e., $\norm{h_i}\leq 1$. If the norm of the Hamiltonians were bounded by $J$ instead of $1$, this factor could be included into the definition of the temperature.

We assume that all graphs are connected, and that their degree is uniformly bounded. That is, the number of edges starting from a given point is smaller than some constant $z$. This implies that $2|\mathcal{E}_n|/z < |\mathcal{V}_n| \leq |\mathcal{E}_n| + 1$. Thus, we can equally characterize the size of the system by the number of spins or interactions, $N=|\mathcal{V}_n|$ and $|\mathcal{E}_n|$, respectively. For convenience we will denote $|\mathcal{E}_n|$ by $K$ and omit the index $n$ in the following.

We also assume that there is a uniformly bounded lattice growth constant. This means that there is a universal constant, $\gamma$, such that for any given $e \in E$ and all $l \in \mathbb{Z}^+$
\begin{equation}\label{eq:latticegrowth}
  \big| \left\{ \mathcal{I} \subseteq \mathcal{E}| \mathcal{I} \ \text{connected}, \  e \in \mathcal{I}, \  \nocomma | \mathcal{I} | = l \right\}\big| \leq \gamma^l.
\end{equation}
That is, the number of connected regions having $l$ edges that include a specific edge, $e$, grows at most exponentially with $l$. In particular, this is the case if $\mathcal{G}_n$ is a regular lattice in any spatial dimension \cite{Klarner1967}. Thus, our treatment includes all those cases.

We consider the Gibbs state corresponding to $H$ given by  (\ref{rho}). We will construct a PEPO, $\tilde{\rho}$, of bond dimension $D$, that is close to that state. In particular, for any $\varepsilon>0$,
\begin{equation}
  \left\| e^{-\beta H} - \tilde{\rho} \right\|_1 \le \varepsilon \|e^{-\beta H}\|_1,
\end{equation}
where $\| x \|_p=\left[{\rm tr} (x^\dagger x)^{p/2}\right]^{1/p}$ stands for the Schatten-p-norm ($\| x \|=\|x\|_\infty$ for the operator norm). We will be interested in how $D$ scales with $N$ (or equivalently, with $K$) and $\varepsilon$.

By a PEPO on a graph $\mathcal{G} = ( \mathcal{E}, \mathcal{V})$ we mean that the operator
$\tilde{\rho}$ admits the following form:
\begin{equation}
  \tilde{\rho} = \sum_{\alpha : \mathcal{E} \rightarrow \{ 1 \ldots D \}} \bigotimes_{v
  \in \mathcal{V}} X^v_{\alpha ( e^v_1) \ldots \alpha ( e^v_{z ( v)})}. \label{eq:mpo}
\end{equation}
Here, $X^v_{\alpha ( e^v_1) \ldots \alpha ( e^v_{z ( v)})}$ are operators acting on the vertex $v$ alone, $z( v)$ is the degree of $v$, and $e^v_1, \ldots e^v_{z ( v)}$ are the edges going through $v$. This definition is the straightforward generalization of PEPS \cite{Verstraete2004} for operators \cite{Verstraete2004b,Zwolak2004}. One can readily see \cite{Cirac2009} that this operator can be written as a tensor network on the graph $\mathcal{G}$, where the bond dimension is $D$.

\section{Construction based on a Trotter expansion and compression}\label{sec:trotter}

In this section we use a Trotter expansion combined with a compression method to approximate the Gibbs state. The intuition about why this expansion should give rise to a PEPO description is the following (see also \cite{Hastings2010}). Let us assume that the operators $h_e$ commute with each other. Then, the Gibbs state (\ref{rho}) is proportional to a product of exponentials, each of them of the form $e^{-\beta h_e}$. One can easily show that each term in that product creates a link in the PEPO \cite{Wolf2008}. The bond dimension, $D_0$, is simply the maximum number of singular values of $h_e$, when decomposed in terms of the vertices it connects, and thus it is independent of $K$ and the temperature. In the general case where the $h_e$ do not commute with each other, we can still perform a Trotter expansion and approximate $\rho$ (up to a constant factor) by $(\tau^\dagger \tau)^M$ where
\begin{equation}
	\tau = \prod_{i=1}^K e^{-\beta h_i/2M} .\label{eq:tau}
\end{equation}
The integer $M$ has to be chosen such that the approximation is good, i.e.
\begin{equation}
\label{Trotterapp}
	\|e^{-\beta H} -(\tau^\dagger \tau)^M\|_1\leq \varepsilon \|e^{-\beta H}\|_1
\end{equation}
for some $\varepsilon>0$. Now, if we use the same argument we see that each time we apply $\tau$, we create a bond between each pair of vertices that are connected in the graph. That is, we multiply the bond dimension by $D_0$. Thus, naively, the final bond dimension will be $D_0^{2M}$, and since $M$ has to grow polynomially with $K$, we get a very bad bound. However, for large $M$ each of the terms in $\tau$ is close to the identity operator. Thus, this operator creates very little entanglement and it should be possible to compress the information that is contained in the bond variables for any pair of connected vertices, and therefore to decrease the bond dimension. In fact, in the case of commuting Hamiltonians one can reduce it to $D_0$, independent of $M$. This is, in fact, what we do in this section: we first find $M$ such that (\ref{Trotterapp}) holds, and then we compress the bond to get a better scaling of the bond dimension with $K$.

More specifically, we write $e^{-\beta h_i/2M}=\Id + (e^{-\beta h_i/2M}-\Id)$, then, after collecting the $K$ terms of $\tau$ and $\tau^\dagger$ into one product of $2K$ terms, we obtain
\begin{equation}
(\tau^\dagger \tau)^M=\prod_{j = 1}^M \prod_{i = 1}^{2K} e^{- \beta \tilde{h}_i / 2 M} = \prod_{j = 1}^M \prod_{i = 1}^{2K} ( 1 + x_i),
\end{equation}
where $\tilde{h_i}$ denotes $h_{K+1-i}$ if $i\leq K$, and $h_{i-K}$ otherwise, and $x_i=e^{-\beta \tilde{h}_i/2M}-\Id$. After expanding the product, this operator takes the form
\begin{equation}\label{eq:expansion}
(\tau^\dagger \tau)^M= \sum_{\lambda
  \in \mathcal{M}_{M,2K}^b} \prod^M_{j = 1} \prod_{i = 1}^{2 K}
  x_i^{\lambda_{i, j}}.
\end{equation}
The sum runs over all $M\times2K$ matrices with entries $0$ or $1$, denoted by $\mathcal{M}_{M,2K}^b$.
From this sum we only keep those terms in which any given $x_i$ appears at most $L$ times in (\ref{eq:expansion}).  In Section \ref{sec:compression} we show that the resulting operator $\tilde{\rho}$ is a good approximation to $(\tau^\dagger \tau)^M$ if $L \approx \log K$.

In section \ref{sec:coding} we show then that the resulting operator can be written as a PEPO, in the sense of (\ref{eq:mpo}), with bond dimension $M^{O ( L)}$. The reason why this operator admits a PEPO form can be understood as follows. First we identify each particular term in the expansion of $(\tau^\dagger \tau)^M$ with the help of indices defined on the edges. This can be done by specifying at every edge, $i$, the position where $x_{K+1-i}$ and/or $x_{K+i}$ appear out of the $M$ possibilities. Once a term is identified, we proceed with the Schmidt decomposition of that term in order to build the local operators $X^v_{\alpha ( e^v_1) \ldots \alpha ( e^v_{z ( v)})}$. Let us notice that the latter only depends on the order in which the operators $x_{e_1^v}$, $x_{e_2^v}$, ... $x_{e_{z(v)}^v}$ appear in the given term, where $e_1^v,\dots e_{z(v)}^v$ are the edges starting from point $v$. This order can be obtained locally from the edges that surround $v$, which contain information about the $x$ involved in each of them.
As a result of that, at every edge we have to specify ${M \choose L}^2\approx M^{2L}$ natural numbers. As $M=poly(K)$ and $L=O(\log K)$, this gives a bond dimension $K^{O(\log K)}$ for the approximating operator. Therefore, as $N\leq 2K/z$, we obtain a bond dimension that scales like $N^{O(\log N)}$.

\subsection{Trotter expansion}\label{sec:trotterproof}

 We know that $(\tau^\dagger \tau)^M$ ($\tau$ as in equation (\ref{eq:tau})) tends to $e^{-\beta H}$ if $M\to \infty$. The question is how big M has to be chosen such that we obtain a good approximation in one-norm. Here we prove  that setting $M=poly(K)$ is enough  (see also \cite{Berry2006}).

We present the proof in two steps. First we show that $\|e^{- \beta H}-(\tau^\dagger \tau)^M\|_1$ is small compared to $\|e^{-\beta H}\|_1$ as long as $\|\eta-\tau\|_{2M}$ is small compared to $\|\eta\|_{2M}$ where $\eta=e^{- \beta H / 2 M}$. Second, we show that $\|\eta-\tau\|_{2M}$ is small compared to $\|e^{-\beta H/2M}\|_{2M}$. The key point is that both $e^{-\beta H}$ and $(\tau^\dagger\tau)$ are close to $(\Id-\beta H/M)^M$.
We state the first step as a proposition:

\begin{proposition}
  \label{prop:T} If $\ep<1/3$ and
  \begin{equation}
    \| \eta - \tau \|_{2 M} \leq \frac{\varepsilon}{M} \| \eta \|_{2M},
  \end{equation}
  then
  \begin{equation}
    \| \eta^{2 M} - ( \tau^{\dagger} \tau)^M \|_1 \leq
    9 \varepsilon  \| \eta^{2 M} \|_1.
  \end{equation}
\end{proposition}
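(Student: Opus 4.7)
The plan is to write $(\tau^\dagger\tau)^M - \eta^{2M}$ as a telescoping sum and to control each term by the generalized H\"older inequality for Schatten norms. Setting $a_i = \tau^\dagger$ for odd $i$ and $a_i = \tau$ for even $i$, one has $(\tau^\dagger\tau)^M = a_1 a_2 \cdots a_{2M}$, and since $\eta = \eta^\dagger$, substituting $\eta$ for each $a_i$ recovers $\eta^{2M}$. This gives the identity
\[
(\tau^\dagger\tau)^M - \eta^{2M} = \sum_{i=1}^{2M} a_1\cdots a_{i-1}\,(a_i - \eta)\,\eta^{2M-i},
\]
reducing the task to bounding $2M$ three-factor products in $1$-norm.

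Next, I would apply the generalized H\"older inequality $\|XYZ\|_1 \le \|X\|_p \|Y\|_q \|Z\|_r$ with $\tfrac{1}{p}+\tfrac{1}{q}+\tfrac{1}{r}=1$, choosing $p = \tfrac{2M}{i-1}$, $q = 2M$, and $r = \tfrac{2M}{2M-i}$. Iterating H\"older inside the leading product gives $\|a_1\cdots a_{i-1}\|_p \le \|\tau\|_{2M}^{i-1}$ (using $\|\tau^\dagger\|_{2M} = \|\tau\|_{2M}$), while positivity of $\eta$ gives $\|\eta^{2M-i}\|_r = \|\eta\|_{2M}^{2M-i}$. The middle factor is just $\|\tau - \eta\|_{2M}$ (again using Hermiticity of $\eta$ on the odd slots). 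The boundary cases $i=1,2M$ are handled identically with two-factor H\"older.

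Combining with the hypothesis and $\|\tau\|_{2M} \le \|\eta\|_{2M} + \|\tau-\eta\|_{2M} \le (1+\varepsilon/M)\|\eta\|_{2M}$, every summand is at most $(\varepsilon/M)(1+\varepsilon/M)^{i-1}\|\eta\|_{2M}^{2M}$. Summing the geometric series in $i$ yields
\[
\|\eta^{2M}-(\tau^\dagger\tau)^M\|_1 \le 2\varepsilon\,(1+\varepsilon/M)^{2M-1}\,\|\eta\|_{2M}^{2M} \le 2\varepsilon\, e^{2\varepsilon}\,\|\eta\|_{2M}^{2M}.
\]
Positivity of $\eta$ converts $\|\eta\|_{2M}^{2M}$ into $\tr(\eta^{2M}) = \|\eta^{2M}\|_1$, and $\varepsilon < 1/3$ makes $2e^{2\varepsilon} < 9$, giving the claim with considerable slack.

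I do not anticipate any substantive obstacle; the proof is essentially bookkeeping once the right telescope and H\"older exponents are chosen. The two ingredients that make everything go through are Hermiticity of $\eta$ (so that a single $\eta$ can play the role of $\tau$ and $\tau^\dagger$ in alternate slots) and positivity of $\eta$ (so that $\|\eta\|_{2M}^{2M} = \|\eta^{2M}\|_1$ and $\|\eta^{k}\|_{2M/k} = \|\eta\|_{2M}^{k}$).
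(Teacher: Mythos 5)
Your proof is correct and follows essentially the same route as the paper's: a telescoping decomposition controlled term by term with the H\"older inequality, positivity of $\eta$, and the triangle-inequality bound $\|\tau\|_{2M}\le(1+\varepsilon/M)\|\eta\|_{2M}$. The only difference is organizational --- you telescope over all $2M$ factors at once, whereas the paper first establishes $\|\eta^2-\tau^\dagger\tau\|_M\le 3(\varepsilon/M)\|\eta^2\|_M$ and then telescopes over the $M$ factors of $(\tau^\dagger\tau)^M$ versus $\eta^{2M}$ --- and your single-pass version in fact yields a slightly better constant than $9$.
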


The proof combines the identity $a^m-b^m=\sum_i a^i (a-b) b^{m-i-1}$ with the H\"older inequality for matrices \cite{Bhatia1997} and it is presented in the Appendix. We state the second statement (that $\eta$ is close to $\tau$) as a lemma.

\begin{lemma} \label{lem:trotterproof}
 	If $M > 36 \beta^2 K^2/\epsilon$ and $\epsilon < 1$, then
  \[ \| \eta - \tau \|_{2 M} \leq \frac{\epsilon}{M} \| \eta \|_{2 M}. \]
\end{lemma}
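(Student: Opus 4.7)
My plan is to first reduce the Schatten-$2M$ estimate to a purely operator-norm second-order Trotter bound. Using the H\"older-type inequality $\|XY\|_{2M} \le \|X\|_\infty\,\|Y\|_{2M}$ and the factorization $\eta - \tau = (\Id - \tau\eta^{-1})\,\eta$, one has
\[
\|\eta - \tau\|_{2M} \le \|\Id - \tau\eta^{-1}\|_\infty\,\|\eta\|_{2M},
\]
so it suffices to prove $\|\Id - \tau\eta^{-1}\|_\infty \le \epsilon/M$ under the hypothesis $M > 36\beta^2 K^2/\epsilon$.

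Setting $t := \beta/(2M)$, the object $\tau\eta^{-1} = \prod_{i=1}^K e^{-th_i}\,e^{tH}$ already equals $\Id$ at first order in $t$ because the linear contributions cancel. To expose the quadratic bound I would telescope $\tau\eta^{-1} - \Id = \sum_{k=1}^K(C_k - C_{k-1})$ with $C_k := \prod_{i\le k}e^{-th_i}\cdot e^{t\sum_{i\le k}h_i}$ (so $C_0 = \Id$ and $C_K = \tau\eta^{-1}$), and factor each increment as
\[
C_k - C_{k-1} = \Bigl(\prod_{i<k}e^{-th_i}\Bigr)\bigl(e^{-th_k}e^{t(A_{k-1}+h_k)} - e^{tA_{k-1}}\bigr),
\]
where $A_{k-1} := \sum_{i<k}h_i$. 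Applying Taylor's theorem with integral remainder to $g_k(s) := e^{-sh_k}e^{s(A_{k-1}+h_k)} - e^{sA_{k-1}}$, and noting $g_k(0) = g_k'(0) = 0$ with $g_k''(0) = [A_{k-1},h_k]$ of norm at most $2(k-1)$, I would extract $\|g_k(t)\|_\infty \le t^2(k-1)\,e^{O(tk)}$. Combined with the prefactor bound $\|\prod_{i<k}e^{-th_i}\|_\infty \le e^{t(k-1)}$ and summed over $k$, this yields $\|\Id - \tau\eta^{-1}\|_\infty \le c\,t^2 K^2\,e^{O(tK)}$ for an absolute constant $c$.

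Substituting $t = \beta/(2M)$, the hypothesis $M > 36\beta^2 K^2/\epsilon$ together with $\epsilon < 1$ forces $tK = \beta K/(2M)$ to be very small, so the $e^{O(tK)}$ factor is bounded by an absolute constant and the bound becomes $c\beta^2 K^2/(4M^2) \le \epsilon/M$ as long as $c \le 144$, leaving ample room. The main obstacle I anticipate is not the absolute constant itself but the scaling in $K$: a naive estimate of $\|g_k''(s)\|$ on all of $[0,t]$ picks up quadratic-in-$A_{k-1}$ contributions of norm $O(k^2)$, which would sum to $O(t^2K^3)$ and force the hypothesis to read $M > \operatorname{const}\cdot\beta^2 K^3/\epsilon$. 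The remedy is to use the exact cancellation visible in $g_k''(0) = [A_{k-1},h_k]$ (norm $O(k)$, not $O(k^2)$) and propagate it to all $s\in[0,t]$, either by differentiating $g_k''$ once more or via a Duhamel identity such as $e^{t(A+h)} - e^{tA} = \int_0^t e^{s(A+h)} h\,e^{(t-s)A}\,ds$ iterated once, so that the sum over $k$ produces only the desired $O(t^2 K^2)$ behavior.
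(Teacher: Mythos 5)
Your proposal is correct, but it takes a genuinely different route from the paper's proof. The reduction to the operator norm is essentially the same in both (the paper writes $\|\eta-\tau\|_{2M}=\|\eta\,\eta^{-1}(\eta-\tau)\|_{2M}\le\|\eta\|_{2M}\,\|\eta^{-1}\|\,\|\eta-\tau\|$, which is your H\"older step in a slightly different factorization); the divergence is in how the operator-norm error is controlled. The paper inserts the common linear approximant $\Id-\beta H/2M$ and bounds the two halves separately: $\|\eta-\Id+\beta H/2M\|$ via the elementary Taylor remainder $\|e^A-\Id-A\|\le\tfrac12\|A\|^2e^{\|A\|}$, and $\|\tau-\Id+\beta H/2M\|$ by fully expanding $\tau=\prod_i(\Id+x_i)$, observing that the first-order term $\sum_i x_i$ cancels $-\beta H/2M$ up to $O(K\beta^2/M^2)$, and summing the higher orders as a geometric series in $3\beta K/2M$. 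You instead telescope $\tau\eta^{-1}-\Id$ and exploit the commutator structure of the second derivative, which is the standard Trotter-error argument. The obstacle you flag is real, and your proposed Duhamel remedy does close it: since $\tfrac{d}{ds}\bigl(e^{-sh_k}e^{s(A_{k-1}+h_k)}\bigr)=e^{-sh_k}A_{k-1}e^{s(A_{k-1}+h_k)}$, one gets
\begin{equation*}
g_k(t)=\int_0^t\Bigl[\bigl(e^{-sh_k}-\Id\bigr)A_{k-1}e^{s(A_{k-1}+h_k)}+A_{k-1}\bigl(e^{s(A_{k-1}+h_k)}-e^{sA_{k-1}}\bigr)\Bigr]\,ds,
\end{equation*}
whence $\|g_k(t)\|\le t^2(k-1)e^{t(k+1)}$ with no $O(k^2)$ contamination; and $e^{O(tK)}=O(1)$ under the hypothesis (if $\beta K\le1$ then $tK\le1/2$, otherwise $M>36\beta K/\epsilon$ forces $tK<1/72$), so the sum over $k$ gives $O(t^2K^2)$ with a constant far below your allowance of $144$. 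What the paper's route buys is elementariness---no commutators or integral representations, only a scalar-type Taylor bound and a geometric series; what yours buys is a more transparent identification of where the second-order cancellation originates and a somewhat sharper constant. Both land on $\|\eta-\tau\|=O(\beta^2K^2/M^2)$, which is exactly what the hypothesis $M>36\beta^2K^2/\epsilon$ is calibrated to absorb.
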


The main idea is that it is enough to prove the statement for the operator norm, as $\|\eta -\tau\|_{2M}$ is bounded by the H\"older inequality
$$\|\eta-\tau\|_{2M} = \|\eta^{-1}\eta(\eta-\tau)\|_{2M}\leq \|\eta^{-1}\| \|\eta\|_{2M}\|\eta-\tau\|,$$ 
and $\|\eta^{-1}\|$ is not too big as $\eta$ is close to the identity operator. In order to show that $\|\eta-\tau\|$ is close to zero, by a simple series expansion we obtain that $\|\eta-\Id+\beta H/M\|$ is small and so is $\|\tau-\Id+\beta H/M\|$. The statement then follows from the triangle inequality. The detailed proof is presented in the appendix.

Putting together Proposition (\ref{prop:T}) and Lemma (\ref{lem:trotterproof}), we obtain that the Trotter approximation is $\ep$-close (in one-norm) if the trotter steps are chosen to be $M>360\beta^2 K^2/\varepsilon$.

\subsection{Compression}\label{sec:compression}
We approximate now $( \tau^{\dagger} \tau)^M$ by an operator $\tilde{\rho}$ starting from  Eq. (\ref{eq:expansion}). This expansion can be pictured as follows. We can think of the resulting operator as a sum:
\begin{equation}
  ( \tau^{\dagger} \tau)^M = \sum_{\text{all fillings}}
  \begin{array}{|c||c|c|c|c|}
    \hline
    & x_1 & x_2 & \ldots & x_{2 K}\\
    \hline \hline
    1 & X &  &  & \\
    \hline
    2 &  &  & X & \\
    \hline
    \vdots &  & X &  & \\
    \hline
    M & X &  &  & X\\
    \hline
  \end{array}\ ,
\end{equation}
where the table can be understood as follows. We begin to read from the upper-left
corner, from left to right, row-by-row. Whenever we meet an $X$ in the actual
cell, we write down the corresponding operator $x_i$ (according to the column), and
otherwise the identity operator. The value assigned to a given table is then the product of those operators. We finally have to sum up the resulting operators for all possible fillings
of the table.

The approximating operator $\tilde{\rho}$ can be thought of in the same way, just
limiting the number of $X$'s in each of the columns.
\begin{equation}
  \tilde{\rho} = \sum_{\tmscript{\begin{array}{c}
    \text{filling per}\\
    \text{column}  \leq L
  \end{array}}} \begin{array}{|c||c|c|c|c|}
    \hline
    & x_1 & x_2 & \ldots & x_{2 K}\\
    \hline\hline
    1 & X &  &  & \\
    \hline
    2 &  &  & X & \\
    \hline
    \vdots &  & X &  & \\
    \hline
    M & X &  &  & X\\
    \hline
  \end{array}\ .
\end{equation}
We want to prove that this is a good approximation: $\| ( \tau^{\dagger}
\tau)^M - \tilde{\rho} \|_1 \leq \varepsilon \| \rho \|_1$ if the maximal number of $X$'s per column, $L$, is chosen big enough. We will show that $L = O ( \log K)$ is enough.

Let us first explain the main idea of the proof. Given a set of columns $\mathcal{I}\subseteq \{1,2\dots K\}$, define $S(\mathcal{I})$  to be the sum of all tables containing more than $L$ $X$'s in all columns $i\in \mathcal{I}$, but with no restriction for the columns not belonging to $\mathcal{I}$. Formally, let $\mathcal{Q}(\mathcal{I})$ denote the set of these tables:
\[ \mathcal{Q} ( \mathcal{I}) = \left\{ \lambda \in \mathcal{M}_{M,2K}^b \mid i \in \mathcal{I} \Rightarrow \sum_j \lambda_{i, j} > L \right\} , \]
then $S(\mathcal{I})$ is the sum
\begin{equation}
  S ( \mathcal{I})  =  \sum_{\lambda \in \mathcal{Q} ( \mathcal{I})} \prod^M_{j = 1} \prod_{i = 1}^{2 K}
  x_i^{\lambda_{i, j}}.
\end{equation}
In any column that has no restriction, the sum can be evaluated, giving back $e^{-\beta \tilde{h}_i/2M}$ in every row of that column. By evaluating those sums we arrive to a sum containing only a few terms. In these remaining terms still a large number of $X$'s appear, therefore the norm of each such term is small. Thus the one-norm of $S(\mathcal{I})$ can be bounded. We will express $\tilde{\rho}$ with the help of the sums $S(\mathcal{I})$ in order to be able to bound its norm.

 We use this observation in order to upper bound the one-norm of $(\tau^\dagger \tau)^M - \tilde{\rho}$. That difference contains one or more columns where there are  more than $L$ appearances of $X$. We regroup the tables as follows. First, given a set of columns, $\mathcal{I}$, we sum up all tables that have more than $L$ appearances of $X$ in the columns $i\in\mathcal{I}$, albeit at most $L$ in all columns $i\notin \mathcal{I}$. This set of tables is the following set:
\[ \mathcal{T} ( \mathcal{I}) = \left\{ \lambda \in \mathcal{M}_{M,2K}^b \mid \sum_j \lambda_{i, j} > L
   \Leftrightarrow i \in \mathcal{I} \right\} . \]
The sum of these tables will be called $R(\mathcal{I})$:
\begin{equation}
  R ( \mathcal{I})  =  \sum_{\lambda \in \mathcal{T} ( \mathcal{I})} \prod^M_{j = 1} \prod_{i = 1}^{2 K}
  x_i^{\lambda_{i, j}}.
\end{equation}
Note that the operator $\tilde{\rho}$ is expressed by $R(\emptyset)$, as $\tilde{\rho}$ is the sum of tables that in each column contain at most $L$ $X$'s.

We can express the sum $S(\mathcal{I})$ with the help of $R(\mathcal{I})$:
\begin{equation}
  S ( \mathcal{I}) = \sum_{\mathcal{J} \supseteq \mathcal{I}} R ( \mathcal{J}), \label{eq:S}
\end{equation}
because in any table in  $S(\mathcal{I})$, the columns containing more than $L$ $X$'s form a set $\mathcal{J}\supseteq\mathcal{I}$.
Note that $(\tau^\dagger \tau)^M=S(\emptyset)$, as   $( \tau^{\dagger} \tau)^M$ contains all tables, with no restriction on the number of $X$'s in any column.

The difference $(\tau^\dagger \tau)^M - \tilde{\rho}$ is then
\begin{equation}\label{eq:S-R}
	( \tau^{\dagger} \tau)^M - \tilde{\rho} = S ( \emptyset) - R ( \emptyset).
\end{equation}
To bound the norm of this difference, we need to express $R(\emptyset)$ with the help of the $S(\mathcal{I})$'s; that is, we need the inverse relation of Eq. (\ref{eq:S}). This inverse relation is given by the M{\"o}bius inversion formula, which is used, for example, in the context of  the Kirkwood-Salzburg equations, for a cluster expansion for the partition function \cite{Kotecky1986,Griffiths1980}. The statement of the M{\"o}bius inversion is the following.

	Let $\mathcal{A}$ be a finite set, $\mathcal{P} ( \mathcal{A})$ the set of all its subsets, and $V$ a vector space. Given a function, $f:\mathcal{P} ( \mathcal{A}) \rightarrow V$, we define the following transformations:
  \begin{eqnarray}
    \hat{f} ( \mathcal{I}) & : = & \sum_{\mathcal{J} : \mathcal{A} \supseteq \mathcal{J} \supseteq \mathcal{I}} f ( \mathcal{J}) \label{eq:hat}\\
    \check{f} ( \mathcal{I}) & : = & \sum_{\mathcal{J} : \mathcal{A} \supseteq \mathcal{J} \supseteq \mathcal{I}} ( - 1)^{| \mathcal{J} \backslash    \mathcal{I} |} f ( \mathcal{J}) \label{eq:check} .
  \end{eqnarray}

\begin{lemma}[M{\"o}bius inversion] \label{lem:Mobius}
  \[ \hat{\check{f}} = \check{\hat{f}} = f \]
\end{lemma}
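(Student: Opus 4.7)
My plan is to derive both identities by a single elementary combinatorial manipulation: substitute one transformation into the other, exchange the order of summation, and reduce the resulting inner sum using the standard alternating-sum identity
$$\sum_{\mathcal{L}\subseteq\mathcal{S}}(-1)^{|\mathcal{L}|}=\begin{cases}1 & \mathcal{S}=\emptyset,\\ 0 & \mathcal{S}\neq\emptyset,\end{cases}$$
which for any finite set $\mathcal{S}$ is just $(1-1)^{|\mathcal{S}|}$ by the binomial theorem.

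For $\hat{\check{f}}(\mathcal{I})=f(\mathcal{I})$, I would first insert (\ref{eq:check}) into (\ref{eq:hat}) to obtain the double sum
$$\hat{\check{f}}(\mathcal{I})=\sum_{\mathcal{J}\supseteq\mathcal{I}}\sum_{\mathcal{K}\supseteq\mathcal{J}}(-1)^{|\mathcal{K}\setminus\mathcal{J}|}f(\mathcal{K}),$$
and then swap the two summations so that $\mathcal{K}$ becomes the outer index ranging over supersets of $\mathcal{I}$, while $\mathcal{J}$ ranges over the interval $\mathcal{I}\subseteq\mathcal{J}\subseteq\mathcal{K}$. Pulling $f(\mathcal{K})$ out and substituting $\mathcal{L}=\mathcal{K}\setminus\mathcal{J}$ (which by construction varies freely over all subsets of $\mathcal{K}\setminus\mathcal{I}$) converts the inner sum into $\sum_{\mathcal{L}\subseteq\mathcal{K}\setminus\mathcal{I}}(-1)^{|\mathcal{L}|}$. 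By the identity above this vanishes unless $\mathcal{K}=\mathcal{I}$, and only the single term $\mathcal{K}=\mathcal{I}$ survives, leaving $f(\mathcal{I})$.

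The identity $\check{\hat{f}}(\mathcal{I})=f(\mathcal{I})$ follows by the mirror argument: substituting (\ref{eq:hat}) into (\ref{eq:check}) and exchanging the order of summation gives an inner sum $\sum_{\mathcal{I}\subseteq\mathcal{J}\subseteq\mathcal{K}}(-1)^{|\mathcal{J}\setminus\mathcal{I}|}$, and under the substitution $\mathcal{L}=\mathcal{J}\setminus\mathcal{I}$ this is again $\sum_{\mathcal{L}\subseteq\mathcal{K}\setminus\mathcal{I}}(-1)^{|\mathcal{L}|}$, which collapses to $\delta_{\mathcal{K},\mathcal{I}}$. There is essentially no obstacle here—the entire content is the alternating-sum identity. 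The only care needed is the bookkeeping when exchanging summation order, namely verifying that once $\mathcal{K}$ is fixed, $\mathcal{J}$ indeed ranges over the full lattice interval $[\mathcal{I},\mathcal{K}]$ in the Boolean lattice of subsets of $\mathcal{A}$; after that, the sign-cancellation on $\mathcal{K}\setminus\mathcal{I}$ does all the work.
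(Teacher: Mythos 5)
Your proposal is correct and follows essentially the same route as the paper: substitute one transform into the other, exchange the order of summation, and collapse the inner alternating sum to $\delta_{\mathcal{K},\mathcal{I}}$ via $(1-1)^{|\mathcal{K}\setminus\mathcal{I}|}$. The only (cosmetic) difference is that you prove both identities in the upward ($\supseteq$) form matching the definitions (\ref{eq:hat})--(\ref{eq:check}), whereas the paper's appendix switches to the mirrored $\subseteq$ version for the second identity; the argument is identical either way.
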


This lemma just expresses that the second transformation is the inverse of the first one. The proof is presented in the Appendix. We will use the lemma by setting $\mathcal{A}$ to be the set of columns, and $f = R$. Thus,  comparing the definitions (\ref{eq:S}) and (\ref{eq:hat}) we deduce that $\hat{f} = S$. Applying the lemma we obtain the desired relation
$$R(\emptyset)=\sum_\mathcal{I} ( - 1)^{| \mathcal{I} |} S ( \mathcal{I}),$$
 and thus substituting back to Eq. (\ref{eq:S-R})
\begin{equation}
  ( \tau^{\dagger} \tau)^M - \tilde{\rho} =  S ( \emptyset) - \sum_\mathcal{I} ( - 1)^{| \mathcal{I} |} S ( \mathcal{I}) ,
\end{equation}
therefore
\begin{equation}
  ( \tau^{\dagger} \tau)^M - \tilde{\rho} = - \sum_{\mathcal{I} \neq \emptyset} (
  - 1)^{| \mathcal{I} |} S ( \mathcal{I}).
\end{equation}
The one-norm of the difference can be bounded by the triangle inequality:
\begin{equation}\label{eq:compression_bound}
\| ( \tau^{\dagger} \tau)^M - \tilde{\rho} \|_1 \leq  \sum_{m = 1}^{2 K} \binom{2 K}{m} \max_{\mathcal{I}:|\mathcal{I}| = m} \|S (\mathcal{I})\|_1.
\end{equation}
We obtained this form by counting the number of subsets $\mathcal{I}$ of the $2K$ columns that have $|\mathcal{I}|=m$. Now, we need to bound the one-norm of $S (\mathcal{I})$. First of all, as noted before, we can sum up over all indices possessing no restriction. That is, over all $\lambda_{i,j}$ with $i\notin \mathcal{I}$. For example, if $2\notin \mathcal{I}$ then
\begin{equation}
  S(\mathcal{I}) = \sum_{\tmscript{\begin{array}{c}
    \text{filling} \leq L\\
    \text{for column } i\in \mathcal{I}
  \end{array}}} \begin{array}{|c||c|c|c|c|}
    \hline
    & x_1 & x_2 & \ldots & x_{2 K}\\
    \hline\hline
    1 & X & e^{-\beta h_2} &  & \\
    \hline
    2 &  & e^{-\beta h_2} & X & \\
    \hline
    \vdots &  & e^{-\beta h_2} &  & \\
    \hline
    M & X & e^{-\beta h_2} &  & X\\
    \hline
  \end{array}\ ,
\end{equation}
where we have already summed up for all $\lambda_{2,j}$. Let $\mu$ be such a term in $S(\mathcal{I})$ in which each $x_i \ (i\in \mathcal{I})$ is appearing exactly $k_i>L$ times.
The one-norm of this term is bounded by the following lemma.
\begin{lemma} \label{lem:S_norm}
If $M > 72 \beta^2 K^2$, then
$$\left\|\mu\right\|_1\leq 3\|e^{-\beta H}\|_1 \left(\frac{3\beta}{M}\right)^{k_1+\dots k_m}.$$
\end{lemma}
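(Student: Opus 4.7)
The plan is to factor the bound into two independent estimates: extracting the $s = k_1 + \cdots + k_m$ small operator norms $\|x_{i_l}\| \le 3\beta/M$, and then bounding the remaining ``skeleton'' by a constant multiple of $\|e^{-\beta H}\|_1$ using Proposition~\ref{prop:T}.

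First I would write $\mu$ explicitly in its fixed row-by-row product order as $\mu = B_0\, x_{i_1}\, B_1 \cdots x_{i_s}\, B_s$, where the $s$ operators $x_{i_l}$ come from the $X$-cells of the $\mathcal{I}$-columns, and each block $B_l$ is a sub-product built from $P_i = e^{-\beta \tilde h_i/(2M)}$ (for $i\notin\mathcal{I}$) and identities (for non-$X$ cells in $\mathcal{I}$-columns). A power-series bound gives $\|x_i\| \le (\beta/(2M))\, e^{\beta/(2M)}$, which is at most $3\beta/M$ under the hypothesis $M > 72\beta^2K^2$. Next I view $\mu$ as a concatenation of its $2M$ sub-rows, $\mu = \prod_{j=1}^{2M} R_j$ (each being one pass of $\tau$ or $\tau^\dagger$ through the edges), and apply H\"older's inequality for Schatten norms to obtain $\|\mu\|_1 \le \prod_j \|R_j\|_{2M}$. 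On the clean operator $(\tau^\dagger\tau)^M$ with sub-rows $\tilde R_j$ this inequality is in fact an equality, because $\|\tau\|_{2M}^{2M} = \tr[(\tau^\dagger\tau)^M] = \|(\tau^\dagger\tau)^M\|_1$, so $\prod_j \|\tilde R_j\|_{2M} = \|(\tau^\dagger\tau)^M\|_1$. Comparing $R_j$ with $\tilde R_j$ sub-row by sub-row and arguing that each inserted $x_i$ contributes an operator-norm factor of at most $3\beta/M$ should yield $\|R_j\|_{2M} \le \|\tilde R_j\|_{2M}\,(3\beta/M)^{t_j}\,(1 + O(1/M))$, where $t_j$ is the number of $X$-cells in sub-row $j$. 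Multiplying over the $2M$ sub-rows, the $(1 + O(1/M))$ corrections accumulate to a universal constant (because $M > 72\beta^2K^2$), while $\sum_j t_j = s$ yields the factor $(3\beta/M)^s$. Finally, invoking Proposition~\ref{prop:T} with $\varepsilon$ chosen small enough (the Trotter hypothesis of Lemma~\ref{lem:trotterproof} being satisfied by the assumption on $M$) bounds $\|(\tau^\dagger\tau)^M\|_1$ by a constant multiple of $\|e^{-\beta H}\|_1$, the constants combining to the claimed prefactor~$3$.

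The main obstacle is the per-sub-row estimate $\|R_j\|_{2M}/\|\tilde R_j\|_{2M} \le (3\beta/M)^{t_j}(1 + O(1/M))$: one must avoid an exponentially-growing prefactor when multiplying over $2M$ sub-rows. The naive factorisation $R_j = \tilde R_j E_j$ with $E_j = \tilde R_j^{-1} R_j$ only yields $\|E_j\| \le e^{\beta K/M}(3\beta/M)^{t_j}$, and raising $e^{\beta K/M}$ to the $2M$-th power produces an unacceptable prefactor of $e^{2\beta K}$. Obtaining the finer $(1 + O(1/M))$ per-sub-row correction requires either distributing the $P_i^{\pm 1}$-conjugations so that they telescope between adjacent sub-rows, or estimating $\|R_j\|_{2M}$ directly using its singular-value structure together with the combinatorial fact that the $k_i$ $X$-cells and $M-k_i$ identity insertions in each $\mathcal{I}$-column are spread across the whole table; working out this bookkeeping is the technically delicate step the appendix must carry out.
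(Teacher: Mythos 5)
Your overall strategy---peeling off the $s=k_1+\cdots+k_m$ operator-norm factors $\|x_i\|\le 3\beta/M$ and controlling the remaining skeleton by $\|\tau\|_{2M}^{2M}\approx\|e^{-\beta H}\|_1$ via the Trotter estimates---is the right one, and your bounds on $\|x_i\|$ and on $\|(\tau^\dagger\tau)^M\|_1$ are fine. But the argument has a genuine gap exactly where you flag it: the per-sub-row estimate $\|R_j\|_{2M}\le\|\tilde R_j\|_{2M}\,(3\beta/M)^{t_j}\,(1+O(1/M))$ \emph{is} the entire content of the lemma, and you do not prove it; you only observe that the naive factorisation fails and defer the resolution to unspecified ``bookkeeping''. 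Moreover the estimate is doubtful in the form you state it, for two reasons. First, $R_j$ differs from $\tilde R_j$ not only in the $t_j$ cells carrying an $X$ but in \emph{every} cell of an $\mathcal{I}$-column: the empty cells of those columns contain $\Id$ rather than $e^{-\beta\tilde h_i/2M}$ (only the unrestricted columns were resummed). So even a sub-row with $t_j=0$ carries a multiplicative discrepancy of order $1+O(\beta m/M)$, which over the $2M$ sub-rows accumulates to $e^{O(\beta m)}=e^{O(\beta K)}$ --- the same failure mode you identify for the naive factorisation, and one your proposal does not address at all. Second, any scheme that compares $R_j$ with $\tilde R_j$ sub-row by sub-row necessarily pays a reinsertion/conjugation cost once per sub-row, i.e.\ $2M$ times, and there is no indication of how to make that cost $1+O(1/M)$ with a constant independent of $K$ and $\beta$.

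The paper's proof avoids the per-sub-row comparison altogether. It performs a single global rearrangement of $\mu$: each occurring $x_i$ is conjugated to the head of its row, defining $y_i=\bigl(\prod_{j<i}e^{-\beta\tilde h_j/2M}\bigr)\,x_i\,\bigl(\prod_{j<i}e^{\beta\tilde h_j/2M}\bigr)e^{\beta\tilde h_i/2M}$, so that $\mu=\prod_{j}\bigl(\prod_{i\in\mathcal{I}}y_i^{\lambda_{i,j}}\bigr)\prod_{i}e^{-\beta\tilde h_i/2M}$, and then applies H\"older \emph{once} to get $\|\mu\|_1\le\prod_i\|y_i\|^{k_i}\,\|\tau^\dagger\tau\|_M^M$. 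The decisive accounting difference is that the conjugation overhead $e^{\beta K/M}$ is charged once per inserted operator, i.e.\ $k_1+\cdots+k_m$ times, and is absorbed into $\|y_i\|\le\frac{\beta}{2M}e^{\beta/2M}e^{\beta K/M}\le\frac{3\beta}{2M}$ (using $M>2\beta K$), rather than once per sub-row, i.e.\ $2M$ times; the clean skeleton $\|\tau\|_{2M}^{2M}$ is then bounded by $3\,\|e^{-\beta H}\|_1$ via Lemma~\ref{lem:trotterproof} with $\epsilon=1/2$, which is where the hypothesis $M>72\beta^2K^2$ enters. To salvage your scheme you would have to arrange the conjugators to telescope across the entire table so that the total overhead is again proportional to the number of $X$'s rather than to the number of sub-rows --- at which point you have reproduced the paper's rearrangement.
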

 This bound is the consequence of the fact that the $x_i$'s, whose norm is small, appear exactly $k_1+k_2+\dots k_m$ times in $\mu$, while the rest of the operators, that is, $e^{-\beta \tilde{h}_i}$, give almost a Trotter approximation of $e^{-\beta H}$. The proof is presented in Appendix \ref{sec:appD}. The number of such terms $\mu$ is given by 
\begin{equation}
	{M \choose k_1} {M \choose k_2} \dots {M \choose k_m},
\end{equation}
as at each column $i\in \mathcal{I}$ one has to choose $k_i$ rows out of the total number of $M$ rows to place the appearing $x_i$'s.
Thus the one-norm of $S(\mathcal{I})$ is bounded by the following sum:
\begin{equation}
	S(\mathcal{I})\geq\sum_{k_1 >L}\dots \sum_{k_m> L} 3\| e^{- \beta H} \|_1 \prod_{i=1}^m{M\choose k_i} \left( \frac{3\beta}{M} \right)^{k_i} ,
\end{equation}
therefore
\begin{equation}\label{eq:binom_lem_need}
\| S ( \mathcal{I}) \| \leq 3 \| e^{- \beta H} \|_1 \left( \sum_{k > L} \binom{ M}{k}  \left(
   \frac{3\beta}{M} \right)^{k} \right)^m .
\end{equation}
The sum in the parenthesis can be upper bounded by  
$$\sum_{k > L} \binom{M}{k} \left(\frac{3\beta}{M}\right)^k \leq e^{3\beta} \left( \frac{3e\beta}{L} \right)^L$$ 
(see Lemma \ref{lem:binom} in Appendix \ref{sec:appE}) and thus
   \begin{equation}
     \| S ( \mathcal{I}) \| \leq 3 \| e^{- \beta H} \|_1 \left[ e^{3\beta} \left( \frac{3e \beta}{L}
     \right)^L \right]^m.
   \end{equation}
Substituting the obtained bound into Eq. (\ref{eq:compression_bound}) the following holds for the error of the compression:
   \begin{equation}
     \| (\tau^\dagger\tau)^M - \tilde{\rho} \|_1 \leq 3 \| e^{- \beta H} \|_1 \sum_{m = 1}^{2K} \binom{2K}{m} \left[
     e^{3\beta} \left( \frac{3e \beta}{L} \right)^L \right]^m .
   \end{equation}
   Thus, after evaluating the sum, we obtain
   \begin{equation}
      \left\| (\tau^\dagger\tau)^M - \tilde{\rho} \right\|_1 \leq 3 \| e^{- \beta H} \|_1 \left( \left[ 1 + e^{3\beta} \left( \frac{3e \beta}{L} \right)^L \right]^{2K} - 1\right).
   \end{equation}
   As $(1+x/K)^K\leq e^x\leq 1+2x$ as long as $x<1$, this yields the bound
   \begin{equation}
     \| (\tau^\dagger \tau)^M - \tilde{\rho} \|_1  \leq  12 \| e^{- \beta H} \|_1 K e^{3\beta} \left( \frac{3e \beta}{L} \right)^L.
   \end{equation}
Therefore, if $\beta\leq b\log K$, setting $L=O(\log K/\epsilon)$ implies
	\begin{equation}
     \| (\tau^\dagger \tau)^M - \tilde{\rho} \|_1  \leq \epsilon \| e^{- \beta H} \|_1,
   \end{equation}
thus the error of the compression is bounded by $\epsilon$ if $L=O(\log K/\epsilon)$ and $M > 72 \beta^2 K^2$.

\subsection{Coding as a PEPO}\label{sec:coding}

We show that the resulting operator $\tilde{\rho}$ admits a PEPO form as in
equation (\ref{eq:mpo}):
\begin{equation}
  \tilde{\rho} = \sum_{\alpha : \mathcal{E} \rightarrow \{ 1 \ldots D \}} \bigotimes_{v
  \in \mathcal{V}} X^v_{\alpha ( e^v_1) \ldots \alpha ( e^v_{z ( v)})}.
\end{equation}
First, let us consider the Schmidt decomposition of the operators $x_i$.
\begin{equation}\label{eq:schmidt_coding}
x_i=e^{-\beta \tilde{h}_i/2M}-1=\sum_{\nu=1}^s A^{v,i}_\nu \otimes A^{w,i}_\nu,
\end{equation}
with $s$ being at most $d_{phys}^2$, where $d_{phys}$ is the dimension of the Hilbert space describing the individual spins, and the edge corresponding to column $i$ is composed of the two particles $v$ and $w$. Note that there are two columns associated to a Hamiltonian term $h_i$, $K+1-i$ and $K+i$.

After this decomposition, we can think of $\tilde{\rho}$ as the following sum:
\begin{equation}
  \tilde{\rho} = \sum_{\tmscript{\begin{array}{c}
    \text{filling per}\\
    \text{column} \leq L
  \end{array}}} \begin{array}{|c||c|c|c|c|}
    \hline
    & x_1 & x_2 & \ldots & x_{2 K}\\
    \hline\hline
    1 & 3 & 1 & 0 & 0\\
    \hline
    2 & 0 & 0 & 3 & 0\\
    \hline
    \vdots & 0 & 3 & 0 &4 \\
    \hline
    M & 2 & 0 & s & 0\\
    \hline
  \end{array}\ ,
\end{equation}
where the sum runs over all fillings that have at most $L$ cells different from $0$ in every column. The table means the following. We begin to read the table from left to right, row-by-row. Whenever we meet a cell in column $i$ containing the number $k$ we write down the operator $A^{v,i}_k \otimes A^{w,i}_k$ as in Eq. (\ref{eq:schmidt_coding}). Otherwise we write down the identity operator. The value of the table is again the product of these operators. 

Every term in the above sum is now a tensor product. The local operator acting on particle $v$ depends only on the columns corresponding to the edges surrounding $v$. Indeed, operators acting non-trivially on particle $v$ occur only in these columns.

Therefore, the index $\alpha(e)$ at edge $e$ will specify a possible filling of the two columns corresponding to $e$, and the operator  $X^v_{\alpha ( e^v_1) \ldots \alpha ( e^v_{z ( v)})}$ will mean the product of the corresponding Schmidt coefficients.

For a given edge $\alpha(e)$ can take 
\begin{equation}
D=\left[\sum_{k\leq L} {M \choose k} s^k\right]^2\leq L^2 (sM)^{2L}
\end{equation}
different values, as the positions of the non-zero elements and their values are needed to be specified for the two columns corresponding to edge $e$.

In Section \ref{sec:trotter} we have shown that we should set $M>360\beta^2K^2/\epsilon$ in order to the Trotter approximation be $\epsilon$-close to the Gibbs state. In Section \ref{sec:compression} we have seen that one can choose $L$ such that the compressed operator, $\tilde{\rho}$, is $\epsilon$-close to the Trotter expansion. Therefore, by the triangle inequality, for any given $\epsilon$ that decreases at most polynomially in the system size, one can approximate the Gibbs state with error $\epsilon$, if the Trotter steps are taken to be $poly(K)$ and the compression, $L$, to be $O(\log(K))$. Thus, our method gives a PEPO approximation with bond dimension $K^{O(\log (K))}$. As $2K/z\leq N$, this is a PEPO with bond dimension $N^{O(\log (N))}$.

In Section \ref{sec:compression} we only have supposed that $\beta \leq b \log (K)$, or equivalently, $\beta \leq b \log (N)$. If $H$ is gapped and the density of states for a fixed energy only grows as $poly(N)$, then by setting $\beta = O( \log (N))$, the ground state projector is approximated by the Gibbs state with an error decreasing as $poly(N)$.  Therefore, our method also gives an $N^{O(\log (N))}$ bond dimensional PEPO approximation of the ground state projector, and thus an $N^{O(\log (N))}$ bond dimensional PEPS approximation for the ground state (for any prescribed error $\epsilon$ that decreases at most as $poly(N)$) under the same condition.

\section{$\tmop{Poly} (N)$ bond dimensional approximation}\label{sec:Hastings}

In this section we show that with the help of the cluster expansion technique {\cite{Hastings2006}} we can approximate the thermal state by an MPO with $N^{O(\beta)}$ bond dimension. For that, we just have to modify theorem 15 in {\cite{Kliesch2013}} and introduce a more efficient way of encoding the PEPO. That theorem says that for $\beta < \beta^{*}$ ($\beta^{*}$ is a constant) the density operator can be well approximated with the truncated cluster expansion, where only clusters of size at most  $O(\log K)$ (equivalently, $O(\log N)$) are included. 
By a clever choice of the coding of the PEPO, we show that for that temperature one just needs a $poly(N)$ bond dimension, and then, as in \cite{Hastings2006}, we extend the result to lower (but finite) temperatures.

\subsection{Cluster expansion}\label{sec:cluster}

Before restating theorem 15 in {\cite{Kliesch2013}} we need to introduce some
notation. Let $\mathcal{E}^{*} = \cup_{k = 0}^{\infty} \mathcal{E}^k$, that is, a word $w$ from $\mathcal{E}^{*}$ denotes a sequence of edges: $w = ( w_1 w_2 \ldots w_k)$.
Let $h_w$ denote the product of the Hamiltonian terms corresponding to those edges, $h_w = h_{w_1} h_{w_2} \ldots h_{w_k}$, 
and let $\tmop{supp} ( w)$ be the set of all edges occurring in $w$.

Every word's support is a set $\mathcal{I} \subseteq \mathcal{E}$. One can break it into
connected components: $\mathcal{I} = \cup_i \mathcal{I}_i$ where the $\mathcal{I}_i$'s are connected, and different components do not contain common points. These connected components are also called clusters. Then, let
$ \mathcal{W}_{L} \subseteq \mathcal{E}^{*}$ be the set of all words whose support contains only connected components of size at most
$L$. $\beta^{*}$ will denote a constant such that $\alpha e^{( 2 z - 1) \beta^{*}} (
  e^{\beta^{*}} - 1) < 1$, and
  \begin{equation}
    \tilde{\rho} = \sum_{w \in  \mathcal{W}_{L}}
    \frac{( - \beta)^{| w |}}{| w | !} h_w.
  \end{equation}
Theorem 15 in {\cite{Kliesch2013}} contains the following statement:

\begin{theorem}\label{thm:kliesch}
  If $\beta\leq \beta^*$, then
  \begin{equation}
    \| e^{-\beta H} - \tilde{\rho} \|_1 \leq \| e^{-\beta H} \|_1 \cdot \left( \exp \left( K
    \frac{x^L}{1 - x} \right) - 1 \right)
  \end{equation}
  with $x =  \gamma e^{( 2 z - 1) \beta} ( e^{\beta} - 1) < 1$.
\end{theorem}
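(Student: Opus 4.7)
The plan is to adapt the cluster expansion proof of \cite{Kliesch2013} to the present setting. I would start from the Taylor series
\begin{equation}
    e^{-\beta H} = \sum_{w \in \mathcal{E}^*} \frac{(-\beta)^{|w|}}{|w|!} h_w,
\end{equation}
so that the truncation error is exactly
\begin{equation}
    e^{-\beta H} - \tilde\rho = \sum_{w \in \mathcal{E}^* \setminus \mathcal{W}_L} \frac{(-\beta)^{|w|}}{|w|!} h_w,
\end{equation}
i.e.\ the sum over words whose support has at least one connected component of size strictly greater than $L$.

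The key structural step is to regroup these words by the connected components (clusters) of their support. Since Hamiltonian terms supported on vertex-disjoint edge sets commute, a multinomial re-ordering argument applied to each length-$n$ word yields the cluster factorization
\begin{equation}
    e^{-\beta H} = \sum_{\{C_1,\dots,C_m\}} \prod_i F_{C_i}, \qquad F_C := \sum_{w:\,\mathrm{supp}(w)=C} \frac{(-\beta)^{|w|}}{|w|!} h_w,
\end{equation}
where the outer sum runs over finite families of pairwise vertex-disjoint nonempty connected edge-clusters, with the empty family contributing $F_\emptyset = \Id$. The operator $\tilde\rho$ admits the same expansion but restricted to $|C_i|\le L$, so the error is precisely the partial sum over families in which at least one $|C_i|>L$.

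The technical core is a per-cluster norm bound of the shape $\|F_C\| \le \bar x^{|C|}$ with $\bar x = e^{(2z-1)\beta}(e^\beta - 1)$, obtained by Taylor-expanding each local exponential, using $\|h_e\|\le 1$, and counting how many words of length $k$ can have support exactly $C$ (a cluster of size $\ell$ sits in an induced subgraph with at most $(2z-1)\ell$ incident edges). Combined with the factorization, and using that cluster contributions on vertex-disjoint supports multiply submultiplicatively in the $1$-norm while the unconstrained product of small-cluster factors rebuilds (up to cancellations) a prefactor proportional to $\|e^{-\beta H}\|_1$, the error is reduced to a bound on
\begin{equation}
    \sum_{\{C_1,\dots,C_m\}:\ \exists i\,|C_i|>L}\ \prod_i \bar x^{|C_i|}.
\end{equation}

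Finally I would apply the lattice growth bound (\ref{eq:latticegrowth}) to estimate $|\{C\ni e : |C|=\ell\}| \le \gamma^\ell$. Singling out an anchor edge $e$ inside one large cluster produces the factor $\sum_{\ell>L}(\gamma\bar x)^\ell = x^{L+1}/(1-x)$ with $x = \gamma\bar x$, while the remaining product over clusters not touching that anchor exponentiates to an $\exp(Kx/(1-x))$-type quantity; summing over the $K$ possible anchor edges and invoking $(1+y)^K - 1 \le e^{Ky}-1$ assembles everything into $\exp(Kx^L/(1-x))-1$, as claimed. The main obstacle is the combinatorial bookkeeping in the cluster factorization and the accompanying $\bar x^{|C|}$ norm estimate with the specific exponent $(2z-1)\beta$: counting orderings of a length-$k$ word with prescribed support, and simultaneously controlling the induced operator norm, is where almost all of the work lies; once that per-cluster bound is in place, the outer resummation governed by $\gamma$ and $K$ is routine.
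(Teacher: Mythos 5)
First, a point of reference: the paper does not prove Theorem~\ref{thm:kliesch} at all --- it is imported verbatim (up to notation) as Theorem~15 of \cite{Kliesch2013}, and the only related derivation in the paper, Appendix~\ref{sec:appF}, establishes the M\"obius/cluster factorization of $\tilde\rho$ into the form of Eq.~(\ref{eq:cluster_approx}), not the error bound. So the proper comparison is with the proof in that reference, and your sketch does follow its outline: Taylor expansion of $e^{-\beta H}$ over words, regrouping by the connected components of the support (your $F_C$ is the paper's $\check f(C)$), a per-cluster norm estimate of the form $\bar x^{|C|}$, lattice-animal counting via Eq.~(\ref{eq:latticegrowth}), and a final exponentiation over the $K$ anchor edges. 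The overcount $x^{L+1}/(1-x)\le x^L/(1-x)$ in your anchor step is harmless.

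The genuine gap is the relative normalization by $\|e^{-\beta H}\|_1$, which you dispose of with ``rebuilds (up to cancellations) a prefactor proportional to $\|e^{-\beta H}\|_1$.'' That is where most of the proof lives, not a routine resummation. If you bound a family containing a large cluster by $\prod_i\|F_{C_i}\|$ times the $1$-norm of ``what remains,'' what remains is not $e^{-\beta H}$: it is the unconstrained cluster sum supported on the edges vertex-disjoint from the large clusters, i.e.\ $e^{-\beta H(\mathcal{E}')}$ for a punctured edge set $\mathcal{E}'$. Converting $\|e^{-\beta H(\mathcal{E}')}\|_1$ back into $\|e^{-\beta H}\|_1$ costs a factor $e^{O(\beta)}$ per edge removed or cut at the boundary of the excised clusters, and it is precisely the bookkeeping of these boundary factors --- together with, not instead of, the word-counting inside a cluster --- that produces the exponent $(2z-1)\beta$ in $x$. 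Your sketch attributes $(2z-1)\beta$ entirely to counting orderings of words with prescribed support, which leaves no budget for the boundary terms; without them the argument delivers a bound relative to something like $\prod_e\|e^{-\beta h_e}\|_1$ or $\|\tilde\rho\|_1$ rather than the claimed $\|e^{-\beta H}\|_1$. To close the gap you need an explicit estimate of the form $\|A\, e^{-\beta H(\mathcal{E}')}\|_1\le \|A\|\, e^{c\beta|\partial|}\,\|e^{-\beta H}\|_1$ for $A$ supported on the excised region, and you must check that the accumulated $e^{c\beta|\partial|}$ is absorbable into the per-edge factor defining $x$ so that the series still converges for all $\beta\le\beta^*$.
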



Similar to equations (56-58) in {\cite{Kliesch2013}} one can show that the
operator $\tilde{\rho}$ admits the following form:
\begin{equation}\label{eq:cluster_approx}
  \tilde{\rho} = \sum_{\tmscript{ \begin{array}{c}
    \mathcal{I} \in \mathcal{C}_L\\
    \mathcal{I} = \uplus \mathcal{I}_i
  \end{array}}} \prod_i \check{f} ( \mathcal{I}_i)
\end{equation}
 where $\mathcal{C}_L$ means the subsets of edges $\mathcal{I}$  that does not contain a connected component of size bigger than $L$, and the connected components of $\mathcal{I}$ are $\mathcal{I}_i$'s. The operators $\check{f} ( \mathcal{I}_i)$ act locally on $\mathcal{I}_i$ and are defined as:
\begin{equation}
  \check{f} ( \mathcal{I}) = \sum_{\tmscript{\begin{array}{c}
    w \in \mathcal{I}^{*}\\
    \tmop{supp} ( w) = \mathcal{I}
  \end{array}}} \frac{( - \beta')^{| w |}}{| w | !} h_w \ .
\end{equation}
We show in Appendix \ref{sec:appF} that $\check{f} ( \mathcal{I})$ is the M{\"o}bius transform of
$f ( \mathcal{I}) = e^{- \beta' H ( \mathcal{I})}$, with $H ( \mathcal{I}) = \sum_{e \in \mathcal{I}} h_e$. This observation makes it
easier to show that $\tilde{\rho}$ admits the form (\ref{eq:cluster_approx}). 

\subsection{Coding}\label{sec:hastingscoding}

We show in this subsection that the truncated cluster expansion $\tilde{\rho}$  (\ref{eq:cluster_approx}) can be written as a PEPO  [cf. Eq. (\ref{eq:mpo})].
This operator has a very special form. It is a sum of products of local operators, such that the operator acting on a vertex $v$ only depends on the cluster where $v$ is contained in. Therefore, coding $\tilde{\rho}$ as a PEPO will be carried out in two steps.
 First, we enumerate all $\mathcal{I}\in \mathcal{C}_L$ subsets of edges with the help of an index $\alpha_1:\mathcal{E}\to \{1,2\dots B_1\}$. This indexing will be such that for any given $v\in \mathcal{V}$ vertex the surrounding edges encode the information in which cluster $v$ is located. Once the cluster $\mathcal{I}_i\ni v$ is identified, the operator $\check{f}(\mathcal{I}_i)$ is written as a PEPO with the help of an index  $\alpha_2:\mathcal{E}\to \{1,2\dots B_2\}$. The index $\alpha$ used at the description of the PEPO is then the composition of $\alpha_1$ and $\alpha_2$ taking $B_1B_2$ different values. 
\paragraph{Identifying the clusters.}  Let the different values of $\alpha_1(e)$ enumerate all clusters containing $e$ and of size at most $L$. For a given cluster size $l$, there are at most $\gamma^l$ clusters containing $e$ (see Eq. \ref{eq:latticegrowth}), therefore there are at most $L\gamma^L$ such clusters. As $L=O(\log K)$, this means that $\alpha_1$ takes at most $B_1\leq poly(K)$ different values. Let us now examine how this indexing is related to the original goal: to enumerate all $\mathcal{I}\in \mathcal{C}_L$ subsets of edges. For any given $\mathcal{I}\in \mathcal{C}_L$ subset one can find the corresponding values $(\alpha_1(e))_{e\in \mathcal{E}}$. However, given an indexing, $\alpha_1$, it might not correspond to such a subset of edges. 
 The reason is the following. Given an indexing $(\alpha_1(e))_{e\in \mathcal{E}}$, each index means a cluster $\mathcal{I}_e$. The subset $\mathcal{I}\in \mathcal{C}_L$ corresponding  to this $\alpha_1$ is $\cup_{e} \mathcal{I}_e$, if for any two edges $e$ and $f$ either  $\mathcal{I}_e= \mathcal{I}_f$, or the two clusters $\mathcal{I}_e$ and $\mathcal{I}_f$ do not have common point. Therefore  the indexing does \emph{not} correspond to an $\mathcal{I}\in \mathcal{C}_L$ subset if and only if there are two edges $e$ and $f$ such that $\alpha_1(e)$ and $\alpha_1(f)$ denote two different, but overlapping clusters. 
Let us join $e$ and $f$ with a path of edges  going in the union of the two clusters $\mathcal{I}_e$ and $\mathcal{I}_f$. Along that path there is a contradiction locally; otherwise, $e$ and $f$ cannot specify contradictory information (see Figure \ref{fig:indexing}). Therefore, if an indexing $\alpha_1$ does not correspond to a subset of edges, then there is a point $v\in \mathcal{V}$ where it can be detected.

    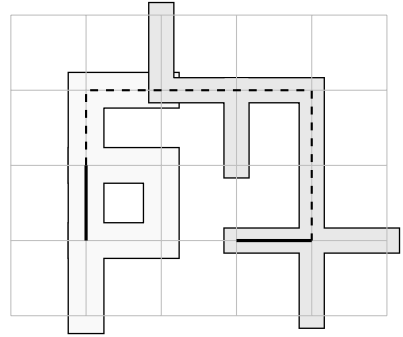
\begin{figure}[h]
    	\centering
		\begin{tikzpicture}
			\draw[line width=14pt,cap=rect] (1,0)--(1,3)--(2,3);
			\draw[line width=14pt,cap=rect] (1,1)--(2,1)--(2,2)--(1,2);
			\draw[line width=13pt,color=gray!5,cap=rect] (1,0)--(1,3)--(2,3);
			\draw[line width=13pt,color=gray!5,cap=rect] (1,1)--(2,1)--(2,2)--(1,2);
			\draw[line width=10pt,cap=rect] (2,4)--(2,3)--(3,3)--(3,2);
		 	\draw[line width=10pt,cap=rect] (3,3)--(4,3)--(4,0);
		 	\draw[line width=10pt,cap=rect] (3,1)--(5,1);
			\draw[line width=9pt,color=gray!18,cap=rect] (2,4)--(2,3)--(3,3)--(3,2);
			\draw[line width=9pt,color=gray!18,cap=rect] (3,3)--(4,3)--(4,0);
			\draw[line width=9pt,color=gray!18,cap=rect] (3,1)--(5,1);
			\draw[thin,color=black!25] (0,0) grid (5,4);
			\draw[very thick] (1,1)--(1,2);
			\draw[very thick] (3,1)--(4,1);
			\draw[thick, dashed] (1,2)--(1,3)--(4,3)--(4,1);
		\end{tikzpicture}
      \caption{Two clusters specified by the thick edges. The information contained in those edges contradict as the clusters overlap. However, the contradiction appear locally somewhere along the dashed line. Thus, our coding will give the 0 operator for this configuration.}
      \label{fig:indexing}
    \end{figure}

\paragraph{Coding the local operators.} Any operator defined on at most $L$ particles can be written as a PEPO with bond dimension $d_{spin}^{2L}$, where $d_{spin}$ is the dimension of the Hilbert space of the particles. For example, an expansion in a product basis of the operators supported on $L$ particles can be viewed as a PEPO. As $\check{f}(\mathcal{I}_i)$ is such a local operator with $L=O(\log K)$, this coding requires an index $\alpha_2$ with  $B_2=poly(K)$ different values. The local operators used for this construction will be $Y^v_{\alpha_2(e_1^v),\dots \alpha_2(e^v_z(v))}(\mathcal{I}_i)$.

With the help of the index $\alpha=(\alpha_1,\alpha_2)$ the operators $X^v_{\alpha ( e^v_1) \ldots \alpha ( e^v_{z ( v)})}$ are constructed as follows. If $\alpha_1(e_1^v),\alpha_1(e_2^v)\dots \alpha_1(e_{z(v)}^v)$ both specify the same cluster $\mathcal{I}_i$  (or some of them the empty cluster, if compatible with $\mathcal{I}_i$), then let 
\begin{equation}
X^v_{\alpha ( e^v_1) \ldots \alpha ( e^v_{z ( v)})}=Y^v_{\alpha_2(e_1^v),\dots \alpha_2(e^v_z(v))}(\mathcal{I}_i),
\end{equation}
 otherwise, if both of them specify the empty cluster, let $X^v_{\alpha ( e^v_1) \ldots \alpha ( e^v_{z ( v)})}$ be $\Id$, otherwise let $X^v_{\alpha ( e^v_1) \ldots \alpha ( e^v_{z ( v)})}$ be $0$. By construction, the contraction of these tensors really gives $\tilde{\rho}$.

As the index used at the coding, $\alpha=(\alpha_1,\alpha_2)$, can only take $B_1B_2=poly(K)$ different values, the above coding is a PEPO with $poly(K)$  (equivalently $poly(N)$) bond dimension. Thus, for any $\beta<\beta^*$, we gave an efficient PEPO description of the Gibbs state. Moreover, Theorem (\ref{thm:kliesch}) holds for $\beta' = \beta / 2 M$ instead of $\beta$ if the trace norm is replaced by $\| . \|_{2 M}$ without any essential modification. Therefore, by taking $M$ such that $\beta'<\beta^*$, that is, $M=O(\beta)$, this result can be extended to lower (but finite) temperatures as well (see Proposition \ref{prop:T}). However, after this step, the approximating operator will be a PEPO exponentiated $M$ times. Therefore the bond dimension required for the PEPO description of the Gibbs state at arbitrary temperature is $N^{O(\beta)}$.

\section{Summary and Outlook}

We have analyzed the ability of tensor networks to describe thermal
(Gibbs) equilibrium states of lattice Hamiltonians with local interactions.
First, using a Trotter expansion and a compression method, we have
shown that it is possible to approximate that state with a PEPO whose
bond dimension scales as $N^{O(\log N)}$, where $N$ is the system size
(number of vertices in the lattice). This result is valid for
any finite temperature and spatial dimension. It also holds true at
zero temperature as long as the Hamiltonian is gapped and the density
of states for any energy interval only grows polynomially with the system
size. Second, building on Hastings' construction \cite{Hastings2006}, we
have shown that it is possible to find a PEPO with a $poly(N)$ bond dimension
at any finite temperature and spatial dimension.

There are some straightforward implications of the results derived
here. First, even though we have concentrated on PEPOs, it is trivial to
express our results in terms of (pure) PEPS. At finite temperature, we
can just consider the PEPO corresponding to half the temperature, and apply it
to locally maximally entangled states in order to obtain a purification
in terms of a PEPS with a polynomially growing bond dimension \cite{Verstraete2004b}.
At zero temperature, we can simply apply the constructed PEPO to a random
product state in order to show that there exists a PEPS with $D=N^{O(\log N)}$.
Second, for translationally invariant problems in regular lattices, our
construction may break  translational invariance (as we select some order of the bonds).
But it is always possible \cite{Perez-Garcia2007} to make a PEPO (or PEPS)
translationally invariant with an increase of the bond dimension by just
a factor of $N$. Third, even though we have considered Hamiltonians
interacting along the edges in the graph, our construction can be
easily extended to the case in which the local Hamiltonians act on
plaquettes. The idea is that at the Trotter decomposition we have made no assumption on the support of the individual Hamiltonian terms, whereas at the coding procedure, we still need to keep information contained in a constant number of columns: in an edge $e=(v,w)$, we can keep the information contained in the columns corresponding to Hamiltonian terms that act non-trivially on either $v$ or $w$. In such a coding the same piece of information is specified in more than one edge, but their consistency can be checked locally, at the vertices. The cluster expansion technique can be applied with no essential modification as the number of terms acting on the boundary of a cluster can still be upper bounded by a constant times the size of the cluster, and the number of clusters containing  $l$ terms is still bounded by $\gamma^l$, where $\gamma$ is a lattice growth constant \cite{Klarner1967}.
Finally, our construction can also be straightforwardly extended to
fermions with the result that we just have to use fermionic PEPS
\cite{Kraus2010}.

\begin{acknowledgements}
	We thank M.C. Ba\~nuls, T. Vidick, Z. Landau and U. Vazirani for discussions. This work has been partially supported by the EU project SIQS, and the DFG project NIM. We also thank the Benasque Center for Science, the Perimeter Institute (Waterloo), and the Simon's Center for the Theory of Computing (Berkeley), where part of the work was carried out, for their hospitailty. NS acknowledges the support from the Alexander von Humboldt foundation and the EU project QALGO. JIC acknowledges support from the Miller Institute in Berkeley.
\end{acknowledgements}

\appendix

\section{Proof of Proposition \ref{prop:T}}
Here we present the proof of Proposition \ref{prop:T}. The proof consists of two steps. First, by the positivity of $\eta$, we show that if $\ep<1/3$ and
   \begin{equation}
     \| \eta - \tau \|_{2 M} \leq \frac{\varepsilon}{M} \| \eta \|_{2 M},
   \end{equation}
then 
\begin{equation}
    \| \eta^2 -  \tau^{\dagger}\tau \|_M \leq 3 \varepsilon \| \eta^2 \|^{}_M .
\end{equation}
Using the identity $a^2-b^2=a(a-b)+(a-b)b$ and the triangle inequality we obtain
   \begin{equation}
     \| \eta^2 -\tau^{\dagger} \tau \|_M = \| \eta ( \eta -\tau)\|_{M} + \|( \eta - \tau^{\dagger}) \tau \|_M ,
   \end{equation}
   thus using the H\"older inequality and that $\| X \|_{2 M} = \| X^{\dagger} \|_{2 M}$ , we conclude that
   \begin{equation}
     \| \eta^2 - \tau\tau^{\dagger} \|_M \leq \left(\| \eta\|_{2M} + \|\tau\|_{2M}\right) \| \eta -\tau\|_{2M} .
   \end{equation}   
$\| \eta -\tau\|_{2M}$ is bounded by the assumptions of the statement, and so is $\|\tau\|_{2M}$ by the triangle inequality. Therefore
   \begin{equation}
     \| \eta^2 - \tau\tau^{\dagger} \|_M \leq (2+\frac{\varepsilon}{M})\frac{\varepsilon}{M}\|\eta\|_{2M}^2.
   \end{equation}
$\eta$ is positive, thus $\|\eta\|_{2M}^2=\|\eta^2\|_{M}$. If $\varepsilon<1$, then
      \begin{equation}\label{eq:diff2}
        \| \eta^2 - \tau\tau^{\dagger} \|_M \leq 3\frac{\varepsilon}{M}\|\eta^2\|_{M}.
      \end{equation}
This completes the proof of the first step. 

Second, we prove that if Eq. (\ref{eq:diff2}) holds, then 
  $$\| \eta^{2M} - (\tau^\dagger \tau)^{M} \|_1 \leq 9\varepsilon
     \| \eta^{2M} \|_1 . $$
     The proof is basically the same as that of the first step. Using the identity $a^m-b^m=\sum_i a^i (a-b) b^{m-i-1}$ and the triangle inequality we obtain
  \begin{equation}
  \| \eta^{2M} - (\tau^\dagger \tau)^M \|_1 \leq \sum_{i = 0}^{M - 1} \| \eta^{2i} (
       \eta^2 - \tau^\dagger \tau) (\tau^\dagger \tau)^{M - i - 1} \|_1 
  \end{equation}
  Hence by H\"older's inequality the difference is upper bounded by 
  \begin{equation}\label{eq:expand_M}
         \sum_{i = 0}^{M - 1} \left\| \eta^{2i} \right\|_{\frac{M}{i}}  \left\| \eta^2 - (\tau^\dagger\tau) \right\|_M  \left\| (\tau^\dagger\tau)^{M-i-1} \right\|_{\frac{M}{( M-i-1)}}.
  \end{equation}
  For $X$ positive semidefinite, and any  real number $r$, $X^{r} \geq 0$,  and thus by the definition of the Schatten norms
\begin{equation}\label{eq:schatten_M}
   \| X^{r} \|_{M / r} = \| X \|_M^{r}.
\end{equation}  
Applying (\ref{eq:schatten_M}) to $\eta$ and $\tau^\dagger\tau$ in (\ref{eq:expand_M}), the inequality takes the following form: 
  \begin{equation}
    \| \eta^{2M} - (\tau^\dagger\tau)^M \|_1 \leq \sum_{i = 0}^{M - 1} \| \eta^2 \|^{i}_M \| \eta^2 - \tau^\dagger\tau \|_M  \| \tau^\dagger\tau \|^{M - i - 1}_M 
  \end{equation}
  $ \| \eta^2 - \tau^\dagger\tau \|_M$ is bounded by Eq. (\ref{eq:diff2}). Hence, by the triangle inequality, $\|\tau^\dagger \tau\|_M$ is bounded as well, 
  $$\| \tau^\dagger\tau \|_M
    \leq \| \eta^2 \|_M + \| \eta^2 - \tau^\dagger\tau \|_M \leq \left( 1 +
    \frac{3\varepsilon}{M} \right) \| \eta^2 \|_M .$$
As $1+3\varepsilon/M>1$, we can upper bound the sum by taking $(1+3\varepsilon/M)^M$ as common factor in every term,
  \begin{equation}
     \| \eta^{2M} - (\tau^\dagger\tau)^M \|_1  \leq M \frac{\varepsilon}{M} \left( 1 + \frac{3\varepsilon}{M} \right)^M \| \eta^2 \|_M^M .
  \end{equation}
Since $(1+3\varepsilon/M)^M<e^{3\varepsilon}<3$, if $\epsilon\leq 1/3$, the statement of the Proposition follows:
  \begin{equation}
     \| \rho^{2M} - (\tau^\dagger\tau)^M \|_1  \leq 9\varepsilon\| \eta^2 \|_M^M .
  \end{equation}

\section{Proof of lemma \ref{lem:trotterproof}} \label{sec:appB}

Here we present the proof of Lemma \ref{lem:trotterproof}, and derive how big the number of Trotter steps should be chosen for a good approximation of the Gibbs state. The proof relies on the fact that if $M$ is big enough, then both $\eta$ and $\tau$ are close to $\Id-\beta H /2M$.
 
  By the use of H{\"o}lder's inequality we obtain
  \begin{equation}
    \| \eta - \tau \|_{2 M} = \|  \eta^{- 1} \eta ( \eta - \tau) \|_{2 M} \leq
    \| \eta \|_{2 M} \| \eta^{- 1} \|  \| \eta - \tau \| \nonumber.
  \end{equation}
 The norm of $\eta^{-1}$ can be upper bounded by a constant if $M >\beta K/2$:
  \begin{equation}
    \| \eta^{- 1} \| \leq e^{\beta K / 2 M} \leq 3.
  \end{equation}
The norm of $\eta -\tau $ will be bounded with the help of the triangle inequality, by adding and subtracting $\Id- \beta H/2M $:
  \begin{equation} \label{eq:trottertriangle}
  	\left\| \eta - \tau \right\| \leq \left\| \eta - \Id + \frac{\beta H}{2 M} \right\|  + \left\| \tau - \Id + \frac{\beta H}{2 M} \right\|.
  \end{equation}
 We will use the following bound on the Taylor expansion of the exponential function to upper bound these expressions.
\begin{lemma} \label{lem:expapprox} The following two bounds hold:
$$\norm{e^A-\id} \leq \norm{A}e^{\norm{A}},$$
$$\norm{e^A-\id -A}\leq \frac{\norm{A}^2}{2}e^{\norm{A}}.$$
\end{lemma}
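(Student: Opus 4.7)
The plan is to prove both bounds directly from the power series definition $e^A = \sum_{k=0}^\infty A^k/k!$, which converges absolutely in operator norm for any bounded operator $A$. The key observation in each case is that, after subtracting the first one or two terms of the series, the remainder can be factored so that what is left in the sum is dominated termwise by the series for $e^{\norm{A}}$.

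For the first inequality, I would write $e^A - \id = \sum_{k=1}^\infty A^k/k!$. Applying the triangle inequality and submultiplicativity of the operator norm gives
\begin{equation}
\norm{e^A - \id} \;\leq\; \sum_{k=1}^\infty \frac{\norm{A}^k}{k!} \;=\; \norm{A}\sum_{j=0}^\infty \frac{\norm{A}^{j}}{(j+1)!}.
\end{equation}
Since $(j+1)! \geq j!$, the right-hand side is bounded above by $\norm{A}\sum_{j=0}^\infty \norm{A}^{j}/j! = \norm{A}e^{\norm{A}}$, which is the claimed bound.

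For the second inequality, the same idea applies after subtracting the constant and linear terms. Writing $e^A - \id - A = \sum_{k=2}^\infty A^k/k!$, the triangle inequality gives
\begin{equation}
\norm{e^A - \id - A} \;\leq\; \sum_{k=2}^\infty \frac{\norm{A}^k}{k!} \;=\; \frac{\norm{A}^2}{2}\sum_{j=0}^\infty \frac{2\,\norm{A}^{j}}{(j+2)!}.
\end{equation}
Because $(j+2)! = (j+2)(j+1)\,j! \geq 2\,j!$ for every $j \geq 0$, each coefficient $2/(j+2)!$ is at most $1/j!$, and the sum is bounded by $e^{\norm{A}}$, giving the stated estimate.

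There is no real obstacle here: both statements are routine consequences of absolute convergence of the exponential series together with the elementary factorial inequalities $(j+1)! \geq j!$ and $(j+2)! \geq 2\,j!$. The only point worth being careful about is the use of submultiplicativity (so that $\norm{A^k} \leq \norm{A}^k$), which of course holds for any sub-multiplicative norm such as the operator norm used throughout the paper.
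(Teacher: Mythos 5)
Your proof is correct and follows essentially the same route as the paper: both expand the exponential series, subtract the leading terms, and bound the tail termwise using the factorial inequality $(j+k+1)! \geq j!\,(k+1)!$ (you simply instantiate it separately for the two cases $k=0,1$ rather than stating it once in general).
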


\begin{proof}
$e^A=\sum_n \frac{A^n}{n!}$, thus
\begin{equation}
e^A-\sum_{n=0}^{k}\frac{A^n}{n!}=\sum_{n=k+1}^{\infty}\frac{A^n}{n!}.
\end{equation}
Therefore the norm of the difference can be upper bounded by  the triangle inequality
\begin{equation}
\|e^A-\sum_{n=0}^{k}\frac{A^n}{n!}\|\leq\sum_{n=k+1}^{\infty}\frac{\norm{A}^n}{n!}\leq\frac{\norm{A}^{k+1}}{(k+1)!}\sum_{n=0}^{\infty}\frac{\norm{A}^n}{n!},
\end{equation}
since $(n+k+1)!\geq n! (k+1)!$. Summing up we have the desired inequality
\begin{equation}
\|e^A-\sum_{n=0}^{k}\frac{A^n}{n!}\|\leq \frac{\norm{A}^{k+1}}{{k+1}!}e^{\norm{A}}.
\end{equation}
 The statements correspond to the particular cases $k=0,1$.
\end{proof}

Due to the previous lemma, we can bound the first part of the right hand side of Eq. \ref{eq:trottertriangle}:
\begin{equation}\label{eq:b6}
      \left\| e^{- \frac{\beta H}{2 M}} - 1 + \frac{\beta
     H}{2 M} \right\| \leq \frac{\beta^2 K^2}{8M^2} e^{\beta K/2M}\leq \frac{\ep}{2M}.
\end{equation}
If $M\geq \beta^2 K^2/\epsilon$ and $M\geq \frac{\beta K}{2}$, because then $e^{\beta K/2M}\leq 3$ and
$3/8\leq 1/2$.

The second part of the right hand side of Eq. \ref{eq:trottertriangle} can be written as
\begin{equation}\label{eq:difference_tau}
      \tau-\Id+ \frac{\beta H}{2 M} =  \prod_i \left[ \Id +x_i \right] - \Id + \frac{\beta H}{2 M} ,
\end{equation}
where $x_i$ is as in Eq. (\ref{eq:expansion}). Let us expand the product. The zeroth term cancels  out, whereas the 1st order term is 
$$\sum_i x_i + \frac{\beta H}{2M}.$$
The norm of the sum of the $k^{th}$ order terms can be upper bounded by 
$${K \choose k} \left(\frac{\beta}{2M} e^\frac{\beta}{2M}\right)^k\leq \left(\frac{3\beta K}{2M}\right)^k$$ 
if $M\geq \beta/2$, because there are $K \choose k$ $k^{th}$ order terms, and the norm of $\|x_i\|$ can be bounded by lemma \ref{lem:expapprox}: 
$$\|x_i\|\leq \frac{\beta}{2M} e^\frac{\beta}{2M}.$$
Therefore, after expanding the product in Eq. (\ref{eq:difference_tau}), we obtain that 
\begin{equation}
   \left\| \tau - 1 +
     \frac{\beta H}{2 M} \right\| \leq \left\| \sum_i x_i + \frac{\beta h_i}{2M}\right\| + \sum_{k=2}^{\infty} \left(\frac{3 K\beta }{2M}\right)^k.
\end{equation}
  The first term can be again bounded by Lemma \ref{lem:expapprox} as  $x_i+\beta h_i/2M=e^{-\beta h_i/2M}-\Id+\beta h_i/2M$:
  \begin{equation}\label{eq:b9}
   \left\|\sum_i x_i-\frac{\beta h_i}{2M}\right\| \leq K \frac{\beta^2 }{4M^2}e^\frac{\beta}{2M}\leq  \frac{K\beta^2 }{M^2},
  \end{equation}
since if $M>\beta/2$, then $e^\frac{\beta}{2M}<4$. The second term can be upper bounded by
\begin{equation}\label{eq:b10}
\sum_{k=2}^{\infty} \left(\frac{3 K\beta }{2M}\right)^k= \left(\frac{3 K\beta }{2M}\right)^2\frac{1}{1-\frac{3 K\beta }{2M}}\leq \frac{5 K^2\beta^2 }{M^2},
\end{equation}
since if $M>3K\beta$, then $\frac{1}{1-\frac{3 K\beta }{2M}}\leq 2$, and $9/2\leq 5$. Finally,  $K>1$ and thus the sum of the bounds obtained in Eq. (\ref{eq:b9}) and in Eq. (\ref{eq:b10}) can be upper bounded by
\begin{equation}\label{eq:b11}
\left\| \tau - 1 + \frac{\beta H}{2 M} \right\| \leq \frac{6 K^2\beta^2 }{M^2}\leq \frac{\ep}{2M},
\end{equation}
if $M>12K^2\beta^2 \frac{1}{\ep}$. 

Putting together the two bounds in Eq. (\ref{eq:b6}) and Eq. (\ref{eq:b11}), we obtain that 
\begin{equation}
\|\eta -\tau\|\leq \frac{\epsilon}{M}
\end{equation}
if $M>12K^2\beta^2 /\ep$. Therefore, the statement follows: if $M>36K^2\beta^2 /\ep$, then
\begin{equation}
\|\eta -\tau\|_{2M}\leq \frac{\epsilon}{M} \|\eta\|_{2M}.
\end{equation}

\section{Proof of the M{\"o}bius inversion}

Here we prove the Lemma \ref{lem:Mobius}. 
The first part of the statement is that $\check{\hat{f}}=f$. Let us define $f'(I)$ as
\begin{equation}
  f' ( \mathcal{I}) = \sum_{\mathcal{J} \supseteq \mathcal{I}} ( - 1)^{| \mathcal{J}\backslash \mathcal{I} |}  \hat{f} ( \mathcal{J}).
\end{equation}
Then, the statement is that $f'(\mathcal{I})=f(\mathcal{I})$. Let us express $\hat{f}$ with the help of $f$ as in Eq. (\ref{eq:hat}):
\begin{equation}
  f' (\mathcal{I}) =  \sum_{\mathcal{J} \supseteq \mathcal{I}} ( - 1)^{| \mathcal{J}\backslash \mathcal{I} |}  \sum_{\mathcal{K} \supseteq \mathcal{J}} f ( \mathcal{K}).
\end{equation}
By changing the order of the sums we obtain
\begin{equation}
  f' ( \mathcal{I}) = \sum_{\mathcal{K} \supseteq \mathcal{I}} f ( \mathcal{K}) \sum_{\mathcal{J} : \mathcal{K} \supseteq \mathcal{J} \supseteq \mathcal{I}} (
  - 1)^{| \mathcal{J}\backslash \mathcal{I} |}.
\end{equation}
We evaluate now the second sum. Suppose first $\mathcal{K} \neq \mathcal{I}$: then
\[ \sum_{\mathcal{J} : \mathcal{K} \supseteq \mathcal{J} \supseteq \mathcal{I}} ( - 1)^{| \mathcal{J}\backslash \mathcal{I} |} = \sum_{\mathcal{J}'
   \subseteq \mathcal{K}\backslash \mathcal{I}} ( - 1)^{| \mathcal{J}' |} = ( 1 - 1)^{| \mathcal{K}\backslash \mathcal{I} |} = 0.
\]
Otherwise, if $\mathcal{K} = \mathcal{I}$, then the sum is one. Substituting this back in the expression of $f'(\mathcal{I})$, we get 
\begin{equation}
  f' ( \mathcal{I}) = \sum_{\mathcal{K} \supseteq \mathcal{I}} f ( \mathcal{K}) \sum_{\mathcal{J} : \mathcal{K} \supseteq \mathcal{J} \supseteq \mathcal{I}} (
  - 1)^{| \mathcal{J}\backslash \mathcal{I} |} = f (\mathcal{I}).
\end{equation}
This proves the first part of the statement. The second part, $\hat{\check{f}}=f$, works similarly. Let us define now $f''$ as follows:
\begin{equation}
  f'' ( \mathcal{I}) = \sum_{\mathcal{J} \subseteq \mathcal{I}} \check{f} (\mathcal{J})  .
\end{equation}
Thus, we have to prove that $f''=f$. Substituting back the expression for $\check{f}$ (as in Eq. \ref{eq:check}) in this equation, we obtain
\begin{equation}
  f'' ( \mathcal{I}) = \sum_{\mathcal{J} \subseteq \mathcal{I}}  \sum_{\mathcal{K} \subseteq \mathcal{J}} (-1)^{| \mathcal{J}\backslash \mathcal{K} |} f ( \mathcal{K}) .
\end{equation}
 By changing the order of the two sums we obtain
\begin{equation}
  f'' ( \mathcal{I}) =  \sum_{\mathcal{K} \subseteq \mathcal{I}} f
  ( \mathcal{K}) \sum_{\mathcal{J} : \mathcal{K} \subseteq \mathcal{J} \subseteq \mathcal{I}} ( - 1)^{| \mathcal{J}\backslash \mathcal{K} |}.
\end{equation}
The second sum is again $\delta_{\mathcal{K}, \mathcal{I}}$, and thus
\begin{equation}
  f'' ( \mathcal{I}) =f(\mathcal{I}).
\end{equation}

\section{Proof of Lemma \ref{lem:S_norm}} \label{sec:appD}
Here we present the proof of Lemma \ref{lem:S_norm}. In $\mu$ two types of terms occur. First, if $i$ refers to a column that has been summed up ($i\notin \mathcal{I}$), then in every row of that column the term $e^{-\beta \tilde{h}_i}$ appears. Second, if $i\in \mathcal{I}$, then the sum on that column has not been evaluated, therefore the corresponding term in row $j$ is $x_i^{\lambda_{i,j}}$. We now separate these terms:
\begin{equation}\label{eq:mu_nice}
	\mu = \prod^M_{j = 1} \prod_{i \in \mathcal{I}} y_i^{\lambda_{i, j}}  \prod_{i\in \{1..2K\}} e^{- \beta \tilde{h}_i / 2 M},
\end{equation}
where we have introduced
\begin{equation}
  y_i = \prod_{j < i} e^{- \beta \tilde{h}_j / 2 M} x_i \prod_{j < i} e^{\beta
  \tilde{h}_j / 2 M} \cdot e^{\beta \tilde{h}_i / 2 M}.
\end{equation}
The norm of $y_i$ can be bounded by the norm of $x_i$ as follows:
\begin{equation}
  \| y_i \| \leq \| x_i \| e^{\beta K / M}
\end{equation}
 because $\|e^{- \beta \tilde{h}_j/2M}\|\leq 1$ and  $\|e^{\beta \tilde{h}_i / 2 M}\|\leq  e^{\beta/ 2 M}$ and there are at most $2K$ such terms in $y_i$. Thus, by applying Lemma \ref{lem:expapprox} to $x_i$, we obtain:
\begin{equation}\label{eq:bound_y}
  \| y_i \| \leq \| x_i \| e^{\beta K / M} \leq \frac{\beta}{2 M}
  e^{\beta / 2 M} e^{\beta K / M} \leq \frac{3 \beta}{2 M} ,
\end{equation}
since $e^{\beta (2K+1) / 2M}\leq 3$ if $M>2\beta K >\beta( K+1/2)$. We now apply H{\"o}lder's inequality to Eq. (\ref{eq:mu_nice}) in order to bound $\|\mu\|_1$:
\[ \left\| \mu \right\|_1 \leq  \prod_i \| y_i
   \|^{\sum_j \lambda_{i, j}}  \left\| \prod_{i\in \{1..2K\}} e^{- \beta \tilde{h}_i / 2 M}
   \right\|_{ M}^{ M} .\]
The last expression of the right hand side is $(\tau^\dagger \tau)$ from the Trotter expansion formula.  By the use of an other H{\"o}lder's inequality 
\[\|\tau^\dagger \tau\|_{M}\leq \|\tau\|_{2M}^2.\]
Using the triangle inequality and Lemma \ref{lem:trotterproof} with the choice $\epsilon=1/2$, we obtain that
\begin{equation}
\left\| \tau \right\|_{2 M} \leq \left( 1 + \frac{1}{2M} \right) \| e^{- \beta
H / 2 M} \|_{2 M}
\end{equation}
if $M >  72 \beta^2 K^2$; therefore
\begin{equation}
  \|\mu\|_1 \leq \prod_i \| y_i \|^{k_i}  \left( 1 + \frac{1}{2M} \right)^{2
  M} \| e^{- \beta H / 2 M} \|_{2 M}^{2 M} .
\end{equation}
Using the bound (\ref{eq:bound_y}) on $\|y_i\|$, and the fact that $(1+1/2M)^{2M}<e<3$, we obtain the statement of the lemma,
\begin{equation}
  \|\mu\|_1 \leq 3 \left(\frac{3\beta}{M}\right)^{k_1+\dots k_n} \|e^{-\beta H}\|_1.
\end{equation}

\section{Lemma on the sum of binomial coefficients} \label{sec:appE}
We need the following lemma to upper bound a sum of binomial coefficients in Eq. (\ref{eq:binom_lem_need}):

\begin{lemma}{\label{lem:binom}}
  $\sum_{k > L} \binom{M}{k} x^k \leq e^{Mx} \left( \frac{eMx}{L}
  \right)^L$.
\end{lemma}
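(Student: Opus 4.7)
The plan is to reduce the tail sum to a product of a binomial coefficient and an exponential series, and then use Stirling's bound on the binomial. First I would factor out the smallest term $\binom{M}{L} x^L$ by writing, for $k > L$,
\begin{equation}
\binom{M}{k} = \binom{M}{L}\cdot \frac{(M-L)(M-L-1)\cdots(M-k+1)}{(L+1)(L+2)\cdots k}.
\end{equation}
The numerator is a product of $k-L$ factors, each bounded by $M$, so $(M-L)\cdots(M-k+1)\le M^{k-L}$. This gives
\begin{equation}
\binom{M}{k} x^k \le \binom{M}{L} x^L \cdot \frac{(Mx)^{k-L}}{(L+1)(L+2)\cdots(L+j)},
\end{equation}
with $j=k-L$.

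Next I would sum over $k>L$, equivalently $j\ge 1$. The key observation is that $(L+i)\ge i$ for every $i\ge 1$, so $(L+1)(L+2)\cdots(L+j)\ge j!$. Hence
\begin{equation}
\sum_{j\ge 1} \frac{(Mx)^j}{(L+1)(L+2)\cdots(L+j)} \le \sum_{j\ge 0} \frac{(Mx)^j}{j!} = e^{Mx},
\end{equation}
which collapses the remaining tail to an exponential factor and gives
\begin{equation}
\sum_{k>L}\binom{M}{k} x^k \le \binom{M}{L}\, x^L\, e^{Mx}.
\end{equation}

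Finally, I would bound the remaining binomial coefficient by the standard estimate $\binom{M}{L}\le M^L/L!$ together with $L!\ge (L/e)^L$, yielding $\binom{M}{L}\le (eM/L)^L$. Substituting back gives
\begin{equation}
\sum_{k>L}\binom{M}{k} x^k \le \left(\frac{eMx}{L}\right)^{L} e^{Mx},
\end{equation}
which is exactly the claim. There is no real obstacle here; the only mildly non-obvious step is the decision to factor out $\binom{M}{L}x^L$ and then compare $L+i$ with $i$ to turn the leftover series into an exponential, rather than attempting a direct termwise bound like $\binom{M}{k}x^k\le (eMx/k)^k$, which does not immediately telescope to the desired clean form.
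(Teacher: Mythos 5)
Your proof is correct and follows essentially the same route as the paper's: both arguments collapse the tail to $\tfrac{(Mx)^L}{L!}e^{Mx}$ (you via factoring out $\binom{M}{L}x^L$ and using $(L+1)\cdots(L+j)\ge j!$, the paper via $\binom{M}{k}\le M^k/k!$ and $(L+n)!\ge n!\,L!$, which is the same inequality) and then finish with Stirling's bound $L!\ge (L/e)^L$. The only difference is bookkeeping order; your intermediate bound $\binom{M}{L}x^Le^{Mx}$ is marginally sharper but lands at the identical final estimate.
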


\begin{proof}
First, as ${M \choose k}\leq M^k/k!$, we have
\begin{equation}
\sum_{k > L} \binom{M}{k} x^k \leq \sum_{k \geq L} \frac{1}{k!} (Mx)^k.
\end{equation}
We then use $(L+n)!\geq n! L!$ and sum up over $n=k-L$.
\begin{equation}
\sum_{k  \geq L} \binom{M}{k} x^k \leq \frac{1}{L!} (Mx)^L e^{Mx}.
\end{equation}
Finally, by Stirling's formula, we have the desired result:
\begin{equation}
\sum_{k > L} \binom{M}{k} x^k \leq e^{Mx} \left( \frac{eMx}{L}.
  \right)^L
\end{equation}
\end{proof}

\section{On the cluster expansion }\label{sec:appF}

In this Section we show how to use the M{\"o}bius inversion to reproduce the cluster expansion.
In particular, we show that 
\begin{equation}
  g ( \mathcal{I}) = \sum_{\tmscript{\begin{array}{c}
    w \in \mathcal{I}^{*}\\
    \tmop{supp} ( w) = \mathcal{I}
  \end{array}}} \frac{( - \beta')^{| w |}}{| w | !} h_w \label{eq:fcheck}
\end{equation}
is the (inverse) M\"obius transform\footnote{Note that the definition of M{\"o}bius inversion is slightly different in this context: we use a sum over $\mathcal{J} \subseteq \mathcal{I}$ instead of a sum over $\mathcal{J} \supseteq \mathcal{I}$. The inverse is defined likewise.} of

\begin{equation}
  f ( \mathcal{J}) = e^{- \beta' H ( \mathcal{J})}.
\end{equation}
Let us consider the M\"obius transform of $g$:
\begin{equation}
\hat{g}(\mathcal{I}) = \sum_{\mathcal{J} \subseteq \mathcal{I}} g ( \mathcal{J}) = \sum_{\mathcal{J} \subseteq \mathcal{I}} \sum_{\tmscript{\begin{array}{c}
    w \in \mathcal{J}^{*}\\
    \tmop{supp} ( w) = \mathcal{J}
  \end{array}}} \frac{( - \beta')^{| w |}}{| w | !}h_w.
\end{equation}
This means that in $\hat{g}$ we have to sum up for all words in $\mathcal{I}^*$. Indeed, in the sum every word is counted exactly once as we sum up all possible supports. This implies that
\begin{equation}
\hat{g}(\mathcal{I}) = e^{-\beta' H(\mathcal{I})}=f(\mathcal{I}),
\end{equation}
and therefore by the M\"obius inversion formula $g=\check{f}$.

We now show that obtaining the form Eq. (\ref{eq:cluster_approx}) of $\tilde{\rho}$ is much easier with these tools. The proof follows from the multiplicativity of $\check{f}$: if $\mathcal{I}$ and $\mathcal{J}$ are non-overlapping clusters, then  $\check{f} ( \mathcal{I} \cup \mathcal{J}) = \check{f} ( \mathcal{I})  \check{f} ( \mathcal{J})$. Indeed,
\begin{equation}
  \check{f} ( \mathcal{I} \cup \mathcal{J}) = \sum_{\mathcal{K} \subseteq \mathcal{I} \cup \mathcal{J}} ( - 1)^{| \mathcal{I}\cup \mathcal{J} \backslash \mathcal{K} |}  e^{- \beta H ( \mathcal{K} )} ,
\end{equation}
where we have used the multiplicativity of the exponential. $\mathcal{K}$ can be broken into two parts: $\mathcal{K}_\mathcal{I}=\mathcal{K}\cap \mathcal{I}$ and $\mathcal{K}_\mathcal{J}=\mathcal{K}\cap \mathcal{J}$. Then both the $-1$ factor and the exponential factorizes as follows:
\begin{equation}
  \check{f} ( \mathcal{I} \cup \mathcal{J}) = \mkern-18mu \sum_{\tmscript {\begin{array}{c} \mathcal{K}_\mathcal{I}\subseteq \mathcal{I}\\ \mathcal{K}_\mathcal{J}\subseteq \mathcal{J}\end{array}}} \mkern-18mu ( - 1)^{| \mathcal{I}\backslash \mathcal{K}_\mathcal{I} |}  ( - 1)^{| \mathcal{J}\backslash \mathcal{K}_\mathcal{J} |} e^{- \beta H ( \mathcal{K}_\mathcal{I})} e^{- \beta H ( \mathcal{K}_\mathcal{J})}
\end{equation}
and this sum is nothing but $\check{f} ( \mathcal{I}) \cdot \check{f} ( \mathcal{J})$.
This implies that the Gibbs state admits the following form: 
\begin{equation}
  \tilde{\rho} = \sum_{\mathcal{I} \subseteq \mathcal{E}} \prod_{\tmscript{\begin{array}{c}
      i : \mathcal{I}_i \text{ are the}\\
       \text{clusters in } \mathcal{I}
    \end{array}}} \check{f} ( \mathcal{I}_i),
\end{equation}
and thus the approximation $\tilde{\rho}$ is nothing but
\begin{equation}
  \tilde{\rho} = \sum_{ \tmscript{\begin{array}{c}
    \mathcal{I} \in \mathcal{C}_L\\
    \mathcal{I} = \uplus \mathcal{I}_i
  \end{array}}} \prod_i \check{f} ( \mathcal{I}_i).
\end{equation}
 as in Eq. (\ref{eq:cluster_approx}).
\bibliography{citations.bib}

\begin{thebibliography}{10}

\bibitem{Hastings2006}
M.~B. Hastings,
\newblock Phys. Rev. B {\bf 73}, 085115 (2006), cond-mat/0508554.

\bibitem{Cirac2012a}
J.~I. Cirac and P.~Zoller,
\newblock Nature Physics {\bf 8}, 264 (2012).

\bibitem{Cirac2009}
J.~I. Cirac and F.~Verstraete,
\newblock Journal of Physics A: Mathematical and Theoretical {\bf 42}, 504004
  (2009), 0910.1130.

\bibitem{Fannes1992}
M.~Fannes, B.~Nachtergaele, and R.~F. Werner,
\newblock Communications in Mathematical Physics {\bf 144}, 443 (1992).

\bibitem{Perez-Garcia2007}
D.~P\'{e}rez-Garc\'{\i}a, F.~Verstraete, M.~M. Wolf, and J.~I. Cirac,
\newblock Quantum Inf. Comput. {\bf 7}, 401 (2007).

\bibitem{Hastings2007}
M.~B. Hastings,
\newblock Journal of Statistical Mechanics: Theory and Experiment {\bf 2007}, 9
  (2007), 0705.2024.

\bibitem{Landau2013}
Z.~Landau, U.~Vazirani, and T.~Vidick,
\newblock arXiv preprint  (2013), 1307.5143.

\bibitem{Verstraete2006}
F.~Verstraete and J.~I. Cirac,
\newblock Phys. Rev. B {\bf 73}, 094423 (2006), 0910.1130.

\bibitem{Srednicki1993}
M.~Srednicki,
\newblock Physical Review Letters {\bf 71}, 666 (1993), hep-th/9303048.

\bibitem{Eisert2010}
J.~Eisert, M.~Cramer, and M.~B. Plenio,
\newblock Rev. Mod. Phys. {\bf 82}, 277 (2010), 0808.3773.

\bibitem{Masanes2009}
L.~Masanes,
\newblock Physical Review A {\bf 80}, 052104 (2009), 0907.4672.

\bibitem{Hamza2009b}
E.~Hamza, S.~Michalakis, B.~Nachtergaele, and R.~Sims,
\newblock Journal of Mathematical Physics {\bf 50}, 095213 (2009), 0904.4642.

\bibitem{Wolf2006}
M.~M. Wolf,
\newblock Physical Review Letters {\bf 96}, 010404 (2006), quant-ph/0503219.

\bibitem{Gioev2006}
D.~Gioev and I.~Klich,
\newblock Physical Review Letters {\bf 96}, 100503 (2005), quant-ph/0504151.

\bibitem{Verstraete2004}
F.~Verstraete and J.~I. Cirac,
\newblock arXiv preprint  (2004), cond-mat/0407066.

\bibitem{Verstraete2004c}
F.~Verstraete and J.~Cirac,
\newblock Physical Review A {\bf 70}, 060302 (2004), quant-ph/0311130.

\bibitem{Wolf2008}
M.~Wolf, F.~Verstraete, M.~B. Hastings, and J.~I. Cirac,
\newblock Phys. Rev. Lett. {\bf 100}, 070502 (2008), 0704.3906.

\bibitem{Hastings2007b}
M.~B. Hastings,
\newblock Physical Review B {\bf 76}, 035114 (2007), cond-mat/0701055.

\bibitem{Kliesch2013}
M.~Kliesch, C.~Gogolin, M.~J. Kastoryano, A.~Riera, and J.~Eisert,
\newblock arXiv preprint  (2013), 1309.0816.

\bibitem{Klarner1967}
D.~Klarner,
\newblock Can J Math {\bf 19}, 851 (1967).

\bibitem{Verstraete2004b}
F.~Verstraete, J.~J. Garc\'{\i}a-Ripoll, and J.~I. Cirac,
\newblock Physical Review Letters {\bf 93}, 207204 (2004), cond-mat/0406426.

\bibitem{Zwolak2004}
M.~Zwolak and G.~Vidal,
\newblock Physical Review Letters {\bf 93}, 207205 (2004), cond-mat/0406426.

\bibitem{Hastings2010}
M.~B. Hastings,
\newblock arXiv preprint  (2010), 1008.5137.

\bibitem{Berry2006}
D.~W. Berry, G.~Ahokas, R.~Cleve, and B.~C. Sanders,
\newblock Communications in Mathematical Physics {\bf 270}, 359 (2006),
  quant-ph/0508139.

\bibitem{Bhatia1997}
R.~Bhatia,
\newblock {\em {Matrix Analysis}} (Springer, 1997).

\bibitem{Kotecky1986}
R.~Koteck\'{y} and D.~Preiss,
\newblock Communications in Mathematical Physics {\bf 103}, 491 (1986).

\bibitem{Griffiths1980}
R.~B. Griffiths,
\newblock {Rigorous results and theorems},
\newblock in {\em Phase transitions and critical phenomena}, edited by C.~Domb
  and M.~S. Green, pp. 7--109, 1980.

\bibitem{Kraus2010}
C.~V. Kraus, N.~Schuch, F.~Verstraete, and J.~I. Cirac,
\newblock Physical Review A {\bf 81}, 052338 (2010), 0904.4667.

\bibitem{Note1}
Note that the definition of M{\"o}bius inversion is slightly different in this
  context: we use a sum over $\protect \mathcal {J} \subseteq \protect \mathcal
  {I}$ instead of a sum over $\protect \mathcal {J} \supseteq \protect \mathcal
  {I}$. The inverse is defined likewise.

\end{thebibliography}

\end{document}